\documentclass[11pt]{article}
\def\submission{0} 

\usepackage[utf8]{inputenc}
\usepackage{setspace}
\usepackage{fullpage}
\usepackage{graphicx}
\usepackage{bbm}
\graphicspath{ {./figures/} }
\usepackage{subcaption}
\usepackage{amsmath,amsfonts,amssymb,amsthm}
\usepackage{xcolor}
\usepackage[colorlinks=true,
            citecolor=blue]{hyperref}
\usepackage{fullpage}
\usepackage{algorithm}
\usepackage{algpseudocode}
\usepackage{bbm}
\usepackage{pgfplots}
\pgfplotsset{compat=1.16}
\usepackage{mathtools}
\usepackage{float}
\usepackage{indentfirst}
\usepackage{tikz}
\usepackage[capitalize]{cleveref}
\usepackage{accents}

\usepackage{framed}

\usetikzlibrary{external, matrix, positioning, patterns, arrows.meta, decorations.markings, decorations.pathmorphing, calc, shapes.misc}
\tikzset{cross/.style={cross out, draw=black, minimum size=2*(#1-\pgflinewidth), inner sep=0pt, outer sep=0pt},
cross/.default={1pt}}
\allowdisplaybreaks
\newtheorem{theorem}{Theorem}[section]

\newtheorem{lemma}{Lemma}[section]

\newtheorem*{prop*}{Proposition}
\newtheorem*{lemma*}{Lemma}
\newtheorem{conjecture}{Conjecture}[section]
\newtheorem*{definition*}{Definition}
\newtheorem*{theorem*}{Theorem}

\theoremstyle{definition}
\newtheorem{definition}{Definition}[section]

\theoremstyle{remark}
\newtheorem{remark}{Remark}[section]
\numberwithin{equation}{section}
\usepackage{xspace}
\crefname{prop}{Proposition}{Propositions}

\makeatletter
\AddToHook{cmd/appendix/before}{\def\cref@section@alias{appendix}\def\cref@subsection@alias{appendix}}
\makeatother

\newcommand{\qsample}{\operatorname{QSAMPLE}}
\newcommand{\sample}{\operatorname{SAMPLE}}
\newcommand{\pex}{\operatorname{PEX}}
\newcommand{\etal}{\emph{et al.\@}}

\newcommand{\Bin}{\mathrm{Bin}}
\newcommand{\poly}{\mathrm{poly}}
\newcommand{\TV}{\mathrm{TV}}

\usepackage[backend=biber,style=alphabetic,maxbibnames=99,maxalphanames=99,url=false]{biblatex}
\providecommand{\email}[1]{\href{mailto:#1}{\nolinkurl{#1}\xspace}}

\title{A Quantum/Classical Example Oracle Separation for Making Things Up}

\ifnum\submission=0 
\author{Kenny Chen\thanks{The University of Sydney, Email: \email{kche5493@uni.sydney.edu.au}}}
\else 
\author{Author(s) redacted}
\fi

\date{August 2026}

\begin{document}

\maketitle

\begin{abstract}
We study the power of quantum examples, as compared to classical examples, in the PAC learning framework. Here, we have two learning algorithms, both with access to quantum computation, but one gets quantum examples, whereas the other gets classical examples. It was previously unknown whether there were learning tasks that can be efficiently performed but not by the latter. Our primary result is to show that relative to an oracle, there are distributions that can be efficiently generated by a quantum learner with access to quantum examples, but not by a quantum learner with access to only classical examples, making progress to answering this question in the affirmative.
\end{abstract}
\section{Introduction}
In 1984, Valiant introduced the celebrated concept of Probably Approximately Correct (PAC) Learnability \cite{DBLP:journals/cacm/Valiant84}, which then sparked the fields of computational learning theory, machine learning theory, and influenced many others. A major question in this field then, was to determine what functions, or classes of functions were efficiently PAC learnable. A decade later, Shor presented his seemingly unrelated yet equally celebrated quantum algorithm for factoring integers \cite{DBLP:journals/siamcomp/Shor97}. This spurred yet another field of questions, centered around the power of quantum computation. Namely, how much more powerful were quantum computers than classical computers, and what other problems admitted efficient quantum algorithms, when no efficient classical algorithm were known, akin to the problem of factoring. Soon, these two fields intertwined, and the question of whether quantum algorithms could be used to produce efficient PAC learners was raised. This question has been the subject of much study since. For example, Atici and Servedio \cite{DBLP:journals/qip/AticiS07} showed that using quantum examples and computations, one could learn and test juntas with quadratic advantage. 

Having established that quantum computing could be used to create more efficient learning algorithms, there was also the question of how strong the separation could be. That is, were there classes of functions that were PAC-learnable quantumly efficiently, but not classically? One of the earlier works to touch on this was by Bshouty and Jackson \cite{BShouty1995}, who gave an efficient quantum algorithm for PAC learning DNF formulae over the uniform distribution: in contrast, there (currently) is no known classical algorithm for efficiently learning these formulae. However, there is also no impossibility result stating that such a classical efficient algorithm cannot exist.

Given the difficulty of showing time separations, as a stepping stone the field turned its focus to query complexity. At a high level, this task involved deducing something about an unknown function, with algorithms equipped with black box query access to it. There were many results showing strong query separations between quantum and classical algorithms,\footnote{This is the field of quantum query complexity.} and so the related question was also raised in learning theory: specifically, \emph{were there concept classes that were efficiently PAC learnable with few quantum samples, but were not efficiently PAC learnable with few classical samples?} This was the natural extension to the learning setting, where instead of being able to query a black box function, one would instead receive samples from an unknown function or distribution, with the task of learning these objects, or some properties of these objects. This question was eventually resolved in the negative by Arunachalam and De Wolf, who showed that for PAC learning under arbitrary distributions, quantum learners could only admit at most a constant factor in sample complexity advantage \cite{Arunachalam2018}. These results were later revisited by Salmon \etal~\cite{salmon24provable}, who showed that if an algorithm had access to not just quantum samples, but also the underlying circuit that generated the quantum samples, then one could actually get a quadratic factor sample advantage in PAC learning under arbitrary distributions. However, this again only demonstrated a quadratic sample advantage, and not a separation result. 

To the best of our knowledge, the first separation result came from Sweke \etal, who showed that quantum \emph{computation} was strictly more powerful than classical computation in the learning setting \cite{DBLP:journals/quantum/SwekeSHE21}. Specifically, they exhibited a distribution class which, with classical examples, could be efficiently generated (see \cref{def:pac_gen_learning}) by an algorithm with quantum computation, but not by an algorithm with classical computation. They left open the question of whether quantum \emph{examples} were strictly more powerful than classical examples (see \cref{fig:classical_quantum_questions}). That is, given two quantum algorithms, one with quantum examples, and one with classical examples, did there exist functions or distributions that could be efficiently learned or generated by the former, but not by the latter. Thus, this raises the following question, the main topic of this paper:
\begin{framed}\itshape
    \noindent{}How much more powerful are quantum \emph{examples}, compared to classical examples, for learning functions with a quantum computer?
\end{framed} 

Though their focus was on proving computation separations, Sweke \etal~did make several comments and directions for how a separation between quantum and classical examples might be proven. We discuss this in detail in \cref{sec:sweke_context}, but provide a high-level overview here. One key result they showed was the following:
\begin{theorem}[\cite{DBLP:journals/quantum/SwekeSHE21} Informal, see \cref{thm:function_learnability_implies_distribution_learnability}]
    If a function class is efficiently PAC learnable, then the induced distribution class of tuples $(x,f(x))$ can be efficiently generated.
\end{theorem}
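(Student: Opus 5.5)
The plan is to build the generator directly from the PAC learner. Fix a concept class $\mathcal{C}$ over a domain $X$, a concept $f\in\mathcal{C}$, and a marginal distribution $D$ on $X$. The induced distribution is $D_f$, the law of $(x,f(x))$ with $x\sim D$. The key observation is that a sample from $D_f$ is \emph{exactly} a labelled PAC example for $f$ under $D$. So the generator can pass the samples it receives straight to the learner. Given $\epsilon,\delta$, the generator draws $m(\epsilon/2,\delta)$ samples from $D_f$, where $m$ is the learner's sample complexity. It runs the learner on these samples and gets a hypothesis $h$ such that, with probability at least $1-\delta$, $\Pr_{x\sim D}[h(x)\neq f(x)]\le \epsilon/2$. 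Efficiency is inherited because only polynomially many samples are used, and the learner's running time is polynomial.

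The generator must also be able to sample from $D$ itself, and it has no description of $D$. I would handle this by reusing samples. The generator stores one extra fresh sample $(x',y')$, or more generally a pool of fresh samples. Its output routine takes a fresh sample $(x',\cdot)$, discards the label, and outputs $(x',h(x'))$. If the definition of generation (\cref{def:pac_gen_learning}) requires an efficient generator \emph{circuit or algorithm} that produces samples on its own, without further access to $D_f$, this does not work directly. In that case I would switch to the evaluator model, or assume, as Sweke \etal{} do, that the marginal $D$ is known or efficiently samplable, for instance uniform. With that assumption the generator is ``sample $x\sim D$ with the known sampler, output $(x,h(x))$.''

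The correctness step is a coupling bound on total variation. Let $G_h$ be the distribution of $(x,h(x))$ with $x\sim D$. Couple $G_h$ and $D_f$ by using the same $x$. The two outputs differ only when $h(x)\neq f(x)$, so
\[
\TV(G_h,D_f)\le \Pr_{x\sim D}[h(x)\neq f(x)]\le \epsilon/2 .
\]
This holds with probability at least $1-\delta$ over the learner's samples and internal randomness. If $D$ is only approximately samplable, say within $\epsilon/2$ in TV, the triangle inequality adds that error, and the total stays within $\epsilon$. If generation is measured in KL divergence rather than TV, I would mix $h$'s output with a small amount of uniform label noise so the divergence stays finite. The resulting bound is $O(\epsilon\log(1/\epsilon))$, which is fixed by rescaling $\epsilon$.

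The main obstacle is the definitional mismatch in the second step. A PAC learner only controls the label-error probability under $D$. A generator, however, must reproduce the whole joint distribution, including the marginal $D$. So the crux is pinning down the exact model in \cref{def:pac_gen_learning}: whether the marginal is fixed and known, or whether the generator has sample access while generating. The reduction goes through cleanly in either case. Everything else, namely the sample reuse, the $\epsilon/2$ split, and the union over failure events, is routine.
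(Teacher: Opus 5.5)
Your reduction is correct and is the standard argument behind this result, which the paper imports from Sweke \etal{} without reproducing a proof: since the marginal of $D_f$ is uniform by \cref{def:distribution_class}, each $\sample$ draw is a $\pex$ example, and the generator that samples $x$ uniformly and outputs $x||h(x)$ is at total variation distance exactly $\Pr_x[h(x)\neq f(x)]\le\epsilon$ from $D_f$. Only your ``known sampler'' branch applies, because \cref{def:pac_gen_learning} withholds the example oracle from the generator and so rules out the sample-pool variant; your one implicit assumption is that the learner's hypothesis is efficiently evaluable, which is standard for efficient learners.
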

To briefly explain the above, PAC learning requires an algorithm, after obtaining samples from an unknown function, to output a function that ``approximates'' the unknown function (for some appropriate notion of approximating). On the other hand, in the generating setting, one only requires that, given samples from an unknown distribution, an algorithm can efficiently output samples from a distribution that is ``close'' to the unknown distribution. Whilst one way of doing this in the above setting, where the underlying distribution is induced by a function, is to learn a function, this is not the only available method in the generating setting. They left the reverse direction open as a conjecture:
\begin{conjecture}[\cite{DBLP:journals/quantum/SwekeSHE21}, Informal, see \cref{conj:function_hardness_implies_distribution_hardness}]
    \label{conj:function_hardness_implies_distribution_hardness:intro}
    If a function class is not efficiently PAC learnable, then the induced distribution class cannot be efficiently generated.
\end{conjecture}
 If proven true, this would potentially allow one to lift hardness results about a function class directly into a quantum/classical computation separation, provided the function class itself was efficiently quantumly learnable, but not efficiently classical learnable (both with classical examples). Sweke \etal~ speculated then, that should this be doable, it could be possible to use similar arguments to adjust the above two results to also show a separation between quantum and classical examples.
\begin{figure}[h]
\centering
\begin{tikzpicture}[
    label/.style={font=\small, align=center}
]

\draw (0,0) rectangle (5,5);
\draw (2.5,0) -- (2.5,5);
\draw (0,2.5) -- (5,2.5);

\fill[gray!25] (0,0) rectangle (2.5,2.5);
\draw[pattern=north east lines] (0,0) rectangle (2.5,2.5);

\node[label] at (1.25,5.9) {Classical\\Computation};
\node[label] at (3.75,5.9) {Quantum\\Computation};

\node[label] at (-1.5,3.75) {Classical\\Examples};
\node[label] at (-1.5,1.25) {Quantum\\Examples};

\node at (2.5,3.75) {\LARGE $<$};
\node[above=4pt] at (2.35,3.75) {\large Sweke \etal};

\node at (3.75,2.5) {\LARGE $\wedge$};
\node[right=3pt] at (3.75,2.5) {\Large ?};

\end{tikzpicture}
\caption{Sweke \etal~showed a separation between learners with classical computation and learners with quantum computation when both had only classical examples. It is an open question as to whether there is a separation between quantum learners when one only has classical examples and the other has quantum examples.}
\label{fig:classical_quantum_questions}
\end{figure}
\subsection{Our Results}
Our first result makes progress towards resolving \cref{conj:function_hardness_implies_distribution_hardness:intro} in the negative. Namely, we show that under standard and widely accepted cryptographic assumptions, there is an oracle relative to which  some function class that cannot be efficiently learned, yet the resulting distribution class can be efficiently generated.
\begin{theorem}[Informal, see \cref{thm:function_hardness_does_not_imply_distribution_hardness}]\label{inf:function_hardness_does_not_imply_distribution_hardness}
    Relative to an oracle encoding a one-way permutation $f$, there exists a concept class of functions $\mathcal{C}_f$, which cannot be efficiently learned. However, for any $g\in \mathcal{C}_f$, the distribution class induced by these functions $(x,g(x))$, can be efficiently generated.
\end{theorem}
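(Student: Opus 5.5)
The natural candidate is the class of inverses of the one-way permutation. Relative to an oracle for $f$ on $\{0,1\}^n$, I would set $\mathcal{C}_f=\{g_s\}$, where each $g_s$ hides some $f^{-1}$-structure. The simplest version takes $g = f^{-1}$ itself, but a class needs several members, so I would use $g_s(x) = f^{-1}(x)\oplus s$ or $g_s(x)=f^{-1}(x\oplus s)$ for $s\in\{0,1\}^n$.

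The generation side works for any such $g_s$ under the uniform input distribution. The generator draws $y$ uniformly, computes $x=f(y)$ with one oracle query, and outputs $(x,y)$. Since $f$ is a permutation, $x$ is uniform and $y=f^{-1}(x)$. So the pair $(x,f^{-1}(x))$ is sampled exactly, with no samples needed. For the shifted class I would use a single classical example $(x_0,g_s(x_0))$ to recover $s$. This gives $s=g_s(x_0)\oplus f^{-1}(x_0)$, which looks like it needs an inversion. To avoid that, I would choose the parametrisation $g_s(x)=f^{-1}(x)\oplus s$ and generate $(f(y),\,y\oplus s)$. Recovering $s$ from examples still needs $f^{-1}(x_0)$, so this choice fails.

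I would therefore switch to the parametrisation $g_s(x) = f^{-1}(x\oplus s)$. Then $(x, g_s(x))$ with $x$ uniform is distributed as $(f(y)\oplus s, y)$. An example $(x_0,y_0)$ reveals $s=x_0\oplus f(y_0)$ using one forward query. After that, we sample $y$ uniformly and output $(f(y)\oplus s,y)$, which matches the target distribution exactly. This generator uses one example and two queries, which is efficient, and it works for an arbitrary product input distribution only if it is uniform. I would fix the uniform distribution in the formal statement, as Sweke et al.'s framework allows.

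The hardness side reduces learning to inversion. Suppose a learner outputs a hypothesis $h$ with $\Pr_x[h(x)\ne g_s(x)]\le \epsilon$. An inverter for $f$ is given a uniform challenge $z=f(w)$. It picks $s$ itself and simulates examples as $(f(y)\oplus s, y)$. It runs the learner to get $h$ and outputs $h(z\oplus s)$. Because $z\oplus s$ is a uniform point, this inverts $z$ with probability at least $1-\epsilon-\delta$.

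The main obstacle is subtle. The simulated point $z\oplus s$ is uniform, but it is correlated with nothing the learner saw, so the reduction is sound. What needs care is that the learner's hypothesis may itself query the oracle. That is fine, because the reduction only forwards forward queries to $f$. The real care is in specifying the oracle model. If $f$ is a random permutation, hardness follows unconditionally from the standard lower bound that inverting a random permutation requires exponentially many (quantum) queries. Quantumly this is $\Omega(2^{n/2})$ by the BBBV/Grover bound. So I would instantiate the oracle as a random permutation and cite that bound to conclude that no efficient learner exists with overwhelming probability over the oracle.
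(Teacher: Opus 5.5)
Your argument is correct and uses the same key trick as the paper, but instantiates it differently. The trick: the generator samples the preimage $y$ itself and queries forward, so $(f(y),y)$ has the right law without any inversion.

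The paper insists on a Boolean concept class, as required by \cref{def:distribution_class} and by the conjecture of Sweke \etal. Writing $f$ for the permutation, it sets $c_r(t||0)=\langle t,r\rangle$ and $c_r(t||1)=\langle f^{-1}(t),r\rangle$. The generator learns $r$ by linear algebra from the $b=0$ half and outputs $f(t)||1||\langle t,r\rangle$ on the other half. Hardness comes from the Goldreich--Levin hardcore predicate; this only excludes error below $1/4$ and assumes one-way permutations exist.

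Your class $g_s(x)=f^{-1}(x\oplus s)$ has $n$-bit labels, which buys three things:
\begin{itemize}
\item $s$ is read off a single example with one forward query.
\item Hardness is a direct reduction to inversion, excluding even inverse-polynomial accuracy.
\item With a random permutation oracle the result is unconditional.
\end{itemize}
Two small points on the last item. Any exponential query bound suffices; for permutations BBBV gives $2^{n/3}$, and the tight $\Omega(2^{n/2})$ is due to Ambainis. Also, fixing one oracle that defeats all efficient learners needs the usual countable-union argument. (The shift is not even necessary: the singleton class $\{f^{-1}\}$ already works.)

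The cost is that your distributions live on $2n$ bits, outside the Boolean framework. So your example does not by itself refute \cref{conj:function_hardness_implies_distribution_hardness} as the paper states it. Nor does your shift trick Booleanize directly. With labels $\langle f^{-1}(x\oplus s),r\rangle$, a single bit no longer determines $s$, and there is no evident forward-query way to recover it. This is why the paper drops the shift and exposes its parameter through the linear half of the domain.
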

This strongly suggests that new ideas are required in order to establish a separation for quantum and classical examples. Our second main contribution aims to resolve this question. Namely, we show that given access to an auxiliary function encoded via a classical oracle, there exists a distribution class that can be efficiently PAC generated by a quantum learner with quantum examples, but not by a quantum learner with only classical examples.
\begin{theorem}[Informal, see \cref{thm:qsample_gennable_sample_not_learnable}]\label{inf:qsample_gennable_sample_not_learnable}
    There exists an oracle encoding $g\colon\{0,1\}^n\to\{0,1\}^n$, a random function, and an associated distribution class $\mathcal{C}_g $ 
    that can be efficiently generated by a quantum learner with quantum examples, when given oracle access to $g$. However, no quantum learner with only classical examples (and with oracle access to $g$) can efficiently generate this distribution class.
\end{theorem}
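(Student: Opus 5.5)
I would build $\mathcal{C}_g$ around a Simon-style hidden shift that only quantum examples expose. For each $s\in\{0,1\}^n\setminus\{0^n\}$, consider the function $h_s$ with $h_s(x)=h_s(x\oplus s)$. Concretely, fix a canonical representative $\min(x,x\oplus s)$ of the pair $\{x,x\oplus s\}$ and set $h_s(x)=g(\min(x,x\oplus s))$, where $g$ is the random oracle function. Let $D_s$ be the distribution of $(x,h_s(x))$ with $x$ uniform, and put $\mathcal{C}_g=\{D_s\}$. The quantum example for $D_s$ is then
\[
\frac{1}{\sqrt{2^n}}\sum_x |x\rangle|h_s(x)\rangle,
\]
which is exactly the state produced by one query in Simon's algorithm.

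The upper bound follows from that observation. Given quantum examples, the learner applies $H^{\otimes n}$ to the first register and measures, obtaining a uniformly random $y$ with $y\cdot s=0$. After $O(n)$ copies, Gaussian elimination recovers $s$ with high probability. The learner then outputs the generator that samples uniform $x$ and queries the oracle $g$ at $\min(x,x\oplus s)$. This reproduces $D_s$ exactly, so the output distribution has zero TV distance. All steps are polynomial time and use polynomially many queries to $g$.

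The lower bound is the main obstacle. With $\poly(n)$ classical samples, the pairs $x_i$ are distinct and no two collide under $x\mapsto x\oplus s$ (birthday bound $q^2/2^n$). So the samples are statistically indistinguishable from $(x_i,g(z_i))$ for essentially independent fresh points $z_i$, and they carry almost no information about $s$.

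The difficulty is that the learner is not asked to find $s$; it is asked to output a generator. That generator must produce collisions $(x,v),(x\oplus s,v)$ only as often as $D_s$ does. More to the point, its outputs must be consistent with $g$ on pairs it has never seen. I would argue through a TV distinguisher. A test that queries $g$ at $\min(x,x\oplus s)$ for a sample $(x,v)$ drawn from the generator accepts with probability 1 under $D_s$. A generator that does not know $s$ can only land on points $\min(x,x\oplus s)$ it has effectively queried. This requires a hybrid or compressed-oracle argument on the random oracle $g$, showing that any quantum algorithm with few queries outputs correct values $g(\min(x,x\oplus s))$ for unknown $s$ with only small success probability. This step is subtle, because the generator itself may query $g$.

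To avoid over-quantifying, I would define $\min(x,x\oplus s)$ via the oracle in a way that hides $s$. For example, compare $g'(x)$ with $g'(x\oplus s)$ for a second random function $g'$; alternatively, argue via $\Omega(2^{n/3})$ Simon query lower bounds that no quantum query algorithm learns $s$, and that without $s$ no generator has small TV distance. Passing from ``cannot learn $s$'' to ``cannot generate within TV $\epsilon$'' is the key step. I would attempt it by showing that an $\epsilon$-close generator yields, via sampling and collision finding against the oracle, an algorithm recovering $s$. That contradicts the Simon lower bound relative to the random $g$ given only classical samples.
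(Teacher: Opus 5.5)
The gap is the classical-example lower bound, which is the substantive half of the statement. You leave it as a plan, and the plan rests on a claim that is false for your class. A generator that knows nothing about $s$ can sample uniform $x$, query $g(x)$, and output $(x,g(x))$. This is correct exactly when $x=\min(x,x\oplus s)$, i.e.\ on half the mass of $D_s$, so it reaches $\TV\approx 1/2$ for free. Choosing representatives via $g'$ changes nothing. Hence your class can be hard at best for $\epsilon<1/2$. Your assertion that, without $s$, correct values $g(\min(x,x\oplus s))$ are produced only with small probability must be replaced by a statement about the non-representative inputs alone.

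Your bridge from \emph{cannot find $s$} to \emph{cannot generate} also does not close. Sampling the generator until a pair $(x,v),(x\oplus s,v)$ appears costs $\Theta(2^{n/2})$ runs by the birthday bound. That is no smaller than the $\Omega(2^{n/2})$ or $\Omega(2^{n/3})$ Simon-type bounds you cite, so no contradiction arises. Those bounds count queries to $h_s$ in any case, whereas your learner's actual resource is the oracle $g$. What hides $s$ is the cost of inverting $g$ at a sampled value $v_i$. With superposition access to $g$ that cost is $O(2^{n/2})$ by Grover, so again the birthday reduction contradicts nothing. An efficient argument would have to extract $x\oplus s$ from the generator's own queries to $g$, e.g.\ by a one-way-to-hiding or compressed-oracle argument, and you do not supply this.

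The paper avoids the free half with the symmetric choice $f_a(x)=g(x)+g(x+a)$. Every value of $f_a$ involves $g$ at the hidden point $x+a$, so without $a$ no value of $f_a$ can be computed from the oracle alone. The paper then argues directly, rather than by reduction to recovering $a$: a generator ignorant of $a$ agrees with $\mathcal{D}_f$ only on the $\poly(n)$ inputs whose partner $x+a$ it has effectively seen, giving $\TV\ge 1-\poly(n)/2^n$. Replacing $g(\min(x,x\oplus s))$ by such a symmetric combination puts you on that route.

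Separately, your upper bound needs a small repair. For a random $g\colon\{0,1\}^n\to\{0,1\}^n$, roughly a $1-e^{-1/2}$ fraction of the representatives collide under $g$, so $h_s$ is not 2-to-1. The Simon outcomes are always in $s^\perp$, since $h_s$ is exactly $s$-invariant, but they are not uniform there. Either make $g$ injective (e.g.\ $3n$ output bits), or use the approximate-Simon theorem of Kaplan et al.\ together with a pseudoperiod bound, as the paper does.
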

Thus, we make significant progress on our guiding main question: to the best of our knowledge, our work is the first to show that in the context of learning (in the oracle setting), quantum examples allow efficient PAC learning of concept classes not efficiently learnable from classical examples, and thus are more powerful.

\paragraph{On the access to an auxiliary oracle.} 
Both our main results rely on auxiliary oracle access to another function $g$. In other words, we have an unknown but queryable function $g$, which then induces a concept class $\mathcal{C}_g$. Then the learning algorithms can utilize both examples from the concept classes as well as queries to the function $g$ in order to PAC learn/generate their concept classes. The resulting generators can also use oracle access to $g$ to generate their outputs. We illustrate this interaction in \cref{fig:pac_setting}. While endowing the learner and hypothesis with such an oracle access is not a standard assumption in PAC learning, we note that this has become a standard assumption in the field of quantum query complexity (and also in general complexity theory): indeed, proving oracle-less separations is one of the questions at the forefront of quantum complexity theory, and we currently lack the techniques to generally do so. As such, one standard assumption is to try and prove separations relative to an oracle. In fact, many separations between quantum and classical algorithms are oracle separations of this type, such as most boolean query separations, separations between \textbf{QMA} and \textbf{QCMA} \cite{4262757, fefferman_et_al:LIPIcs.MFCS.2018.22, Natarajan2024distributiontesting} and the oracle separation between \textbf{BQP} and \textbf{PH} \cite{DBLP:journals/jacm/RazT22}, to name a few. Hence, since our goal is to effectively show a separation between quantum and classical samples for the purpose of learning, we believe it is natural to show this separation in the context of an oracle separation.

\begin{figure}[htbp]
\centering
\begin{tikzpicture}[
    node distance=2cm and 2.5cm,
    box/.style={draw, rectangle, minimum width=2cm, minimum height=0.8cm, font=\small},
    bigbox/.style={draw, rectangle, minimum width=3cm, minimum height=1.5cm, font=\small},
    arrow/.style={->, thick},
    edgelabel/.style={midway, fill=white, draw=none, inner sep=1pt, align=center, font=\footnotesize},
]

\node[box] (g) {Auxiliary function $g$};

\node[box] (Qg) [above right=of g] {Oracle $O_g$};
\node[box] (Cg) [below right=of g] {Concept class $\mathcal{C}_g$};

\node[box] (alg) [right=3.5cm of g] {PAC Algorithm};

\node[box] (out) [right=of alg] {Output Learner/Generator};

\draw[arrow] (g) -- node[edgelabel, above] {$g$ encoded via $O_g$} (Qg);
\draw[arrow] (g) -- node[edgelabel, below] {$g$ induces $\mathcal{C}_g$} (Cg);

\draw[arrow] (Qg) -- node[edgelabel, above] {oracle access to $g$} (alg);
\draw[arrow] (Cg) -- node[edgelabel, below] {examples from $\mathcal{C}_g$} (alg);

\draw[arrow] (alg) -- node[edgelabel, above] {output} (out);

\draw[arrow] (Qg) to[bend left=20] node[edgelabel, above] {still has oracle access to $g$} (out);

\draw[arrow] (Cg) to[bend right=20] node[pos=0.5, red, cross out, draw, 
    minimum width=0.4cm, minimum height=0.4cm, line width=0.6pt, 
    anchor=center, sloped] {} (out);
\draw[arrow] (Cg) to[bend right=20] node[edgelabel, below, pos=0.3, text width=2.2cm] {no examples\\from $\mathcal{C}_g$} (out);

\end{tikzpicture}
\caption{Illustration of our PAC learning setting. An auxiliary function $g$ induces a concept class $\mathcal{C}_g$. The PAC learning algorithm has oracle access to $g$ and examples from $\mathcal{C}_g$. They must output a learner/generator, which operates with oracle access to $g$, but no further examples from $\mathcal{C}_g$.}
\label{fig:pac_setting}
\end{figure}

\subsection{Our Techniques}
We here provide a high-level overview of the techniques we use to obtain \cref{inf:function_hardness_does_not_imply_distribution_hardness}. To construct our concept class, we make use of \emph{one-way permutations}, which informally, are permutations that are easy to evaluate, but (computationally) hard to invert. Given a one-way permutation $g$, and a randomly sampled nonzero string $r$, one can define \emph{the hardcore predicate}, $B_r(x)=\langle g^{-1}(x),r \rangle\bmod 2$, which, roughly speaking, is a one-bit function ``encapsulating'' the hardness of $g$ (i.e., hard to compute, even given $g$). Our function class then effectively consists of all of the hardcore predicates of this one-way permutation, and the distribution class is the distribution naturally induced by these predicates. It is well-known  that evaluating the hardcore predicate given examples of the form $(g(x),r)$ is equivalent to being able to invert $g$ itself. We then equip both learners with oracle access to the one-way permutation $g$. One learner is tasked with PAC learning the concept class, whilst the other is tasked with PAC generating this concept class. We then argue that the former task cannot be done efficiently; this task indeed implies that, for \emph{most} strings $x$ sampled uniformly at random, the learner should be able to correctly evaluate $B_r(x)$: we show that any learner that can do this efficiently can be used to efficiently invert the one-way permutation $g$, contradicting well-established hardness assumptions. In contrast, we show that the latter PAC generation task, with oracle access to $g$, can be done efficiently. The crucial difference is that in the former task, for a string $x$ sampled uniformly at random, the learner effectively needs to compute $g^{-1}(x)$ for most $x$'s, which (intuitively) requires inverting $g$. However in the latter task, the algorithm only needs to generate samples from a distribution close enough to the true induced distribution, and gets to ``choose'' the $x$'s. That is, the key idea is to flip the table: the learner can uniformly sample a string $s$, and query the oracle to obtain $t \coloneqq g(s)$. Now, we have $s=g^{-1}(t)$, so the learner looks like it ``inverted'' the function on this specific point $t$, and output a sample corresponding to the string $t$, instead of $s$. Because $g$ is a one-way permutation, the distribution of $t=g(s)$ (for $s$ uniform) is still uniform. (Importantly, the generator still does not know how to invert $g$, and does not have to.) The final remark is that since  quantum-resilient one-way permutations are widely believed to exist, the hardness result extends to both quantum and classical learners.

Our second main contribution (\cref{inf:qsample_gennable_sample_not_learnable}) is a learning separation between quantum and classical examples relative to an oracle. The starting point is that, as shown by Kaplan \etal~\cite{DBLP:conf/crypto/KaplanLLN16}, Simon's algorithm still succeeds with high probability on functions with a low number of pseudoperiods. The then first sample a random function $g\colon\{0,1\}^n\to\{0,1\}^n$. From this, any nonzero string $a$ yields a periodic function $f_a$, defined by $f_a(x)=g(x)+g(x+a)$. This leads us to defining the concept class $\mathcal{C}_g$ induced by the random function $g$, which consists of all $2^n-1$ such $f_a$'s. Again, we give both learners oracle access to the random function $g$: in this case, both learners are quantum learners, the first with only classical examples of $f\in \mathcal{C}_g$, and the second with quantum examples. Both are then tasked with generating the distribution class induced by $\mathcal{C}_g$. We are able to show that, with high probability over the uniformly random function $g$, the function $f$ has a low number of pseudoperiods for all real periods $a$. By the probabilistic method, this then implies there exists a function $g$ such that all possible offsets $a$ have a low enough pseuodoperiod factor that Simon's algorithm succeeds with high probability in recovering the period $a$. This implies that the learner with quantum examples can easily generate this distribution class, by recovering the period $a$, then uniformly sampling a string $x$, and querying the oracle $g$ twice to obtain $g(x)$ and $g(x+a)$. But, by standard hardness results, recovering the period $a$ from classical examples only requires an exponential number of examples, so this particular route is unavailable to the second learner. We then show that any algorithm that does not recover the period will generate from a distribution at total variation distance $1-o(1)$ from the true distribution, implying that any quantum learner with only classical samples cannot efficiently generate this distribution class. This in turn, shows a separation between classical and quantum examples. 

\subsection{Related Work}
As mentioned prior, there has been significant research on quantifying how much more powerful a quantum query is compared to a classical query in the field of query complexity. We refer to \cite{DBLP:journals/corr/abs-2508-08852} as a survey of this field. In particular, a result of Simon \cite{DBLP:journals/siamcomp/Simon97a} showing that quantum queries are strictly more powerful than classical queries in being able to detect a hidden period serves as a baseline for showing our PAC learning separation. Whilst we aim to show a separation between quantum and classical queries in the learning setting, such separations have also been studied in other contexts, such as graph algorithms \cite{DBLP:journals/siamcomp/DurrHHM06}, boolean function analysis \cite{DBLP:journals/siamcomp/AaronsonA18}, and many others. Using different complexity assumptions, advantages have also been studied and proven in other settings. To mention a few, assuming $\text{\textbf{BQP}}$ hard processes cannot be simulated by $\poly$-sized classical circuits, Molteni \etal~show that Pauli strings and related objects are learnable by quantum computers with classical examples, but not by classical computers \cite{MolteniRiccardo2026Eqai}. Molteni \etal~also show, assuming \textbf{BQP} is low in the polynomial hierarchy, that the problem of identification, instead of replication, in PAC learning with classical examples is hard \cite{ICLR2026_f0075fe4}. However, much of the focus of this collection of prior work either focuses explicitly on quantum computation, or a mix of quantum computation and examples, as opposed to isolating the advantage of quantum examples. Furthermore, not all of this prior work explicitly lies in the field of testing and learning, which is the focus of this paper.

More specifically, there is also the subsection of quantum learning theory that focuses on studying advantages in quantum learning and testing problems. We start by referring to a survey on quantum property testing by Montanaro \cite{DBLP:journals/toc/MontanaroW16}. Such property testing problems include entropy estimation \cite{DBLP:journals/tit/LiW19}, hypothesis testing, and closeness testing \cite{chen2025listcomplexityboundsproperty}\footnote{This manuscript is a collection of bounds provable by using the methods of \cite{DBLP:journals/siamcomp/WangZ25} and \cite{DBLP:journals/corr/abs-2510-07622}.} to name a few. In the context of distributional testing, Gilyen \etal~were able to show that quantum examples gave a polynomial advantage over classical examples \cite{DBLP:conf/innovations/GilyenL20}, so long as one also has access to the unitary circuit which creates this example. Other such papers have studied this access model, for instance Canonne \etal~\cite{DBLP:conf/tqc/CanonneKO25}, where the advantage, amongst other reasons, come from being able to conjugate the generating circuit. Our results differ from these in a few ways. First, though we assume an oracle setting, our separations do not rely on having access to the unitary which creates the quantum examples, meaning we cannot exploit the use of the circuit in our learning problems. Secondly, whilst these prior results study testing (of distributions), our separations hold for learning of distributions. Also of note is though we focus on distributions, there is also much work on learning and testing of functions,. Some such work includes juntas \cite{DBLP:conf/coco/TalY26, DBLP:journals/qip/AticiS07}, monotonicity and triangle freeness \cite{DBLP:conf/innovations/CaroNS26}. Of interest is the latter work of Caro \etal, who show that there exists a problem in function testing for which classical examples actually exhibit an exponential advantage over quantum examples. It would be interesting to see whether such a separation exists in learning as well.

Placing this work in the broader context of learning theory, this work has connections to PAC learning. One key question this work explores is what the differences between PAC learning and generating are. Kearns \etal~\cite{DBLP:conf/stoc/KearnsMRRSS94} first studied this difference between learning and generating, and showed the first example of a concept class that could efficiently be PAC generated, but not PAC learned. Sweke \etal~\cite{DBLP:journals/quantum/SwekeSHE21} then further developed this into a computation separation between classical generators and quantum generators. More broadly, this work also explores the fundamental task of \emph{learning} a distribution from samples, instead of just generating it. This has been very thoroughly studied, and we refer to a chapter by Dianikolas for a survey \cite{DBLP:books/crc/p/Diakonikolas16}.
\subsection{Outline}
We provide an overview of PAC learning and the types of oracle access we will consider in \cref{sec:preliminaries}. We then move to show that function hardness does not imply distributional hardness in PAC learning relative to an oracle in \cref{sec:function_hardness_does_not_imply_distributional_hardness}. Then, we show a separation between quantum examples and classical examples relative to an oracle in \cref{sec:quantum/classical_example_separations}. We then conclude and provide some open problems in \cref{sec:conclusion}.
\section{Preliminaries}\label{sec:preliminaries}
We use standard asymptotic notation throughout, as well as the (slightly less standard) $\tilde{O}(\cdot),\tilde{\Omega}(\cdot)$ notation to omit polylogariothmic factors in the argument. We write ``u.a.r.'' for ``uniformly at random'', and will denote addition modulo 2 by $+$ instead of $\oplus$. Finally, the concatenation of two strings $a$ and $b$ is denoted $a||b$. In what follows, we use \emph{computationally efficient} (or just \emph{efficient}) to indicate that an algorithm runs in polynomial time (in the relevant parameters).
\subsection{PAC Learning}
We introduce the necessary definitions for PAC learning in this section. All definitions will be for general oracles, with more details on the specific forms of oracles we use given in \cref{sec:oracles}. First, we recall the standard definition of PAC learning of functions: for in depth-survey of this field, we refer the reader to~\cite{DBLP:journals/corr/abs-2511-08791}.
\begin{definition}[PAC Learning of Functions]\label{def:pac_learning_functions}
Let $\mathcal{D}$ be a probability distribution over $n$-bit strings, and $\mathcal{F}_n$ be the set of all functions with domain in the $n$-bit strings. An algorithm $\mathcal{A}$ is an $(\epsilon,\delta,O,\mathcal{D})$-PAC learner for a concept class $\mathcal{C}\subseteq \mathcal{F}_n$ if for all $c\in\mathcal{C}$, when given the oracle $O(c,\mathcal{D})$, which outputs an example from $c(x)$ with $x$ drawn according to probability distribution $\mathcal{D}$, with probability at least $1-\delta$, the learner $\mathcal{A}$ outputs a hypothesis $h\in\mathcal{F}_n$ such that
\begin{equation*}
    \Pr_{x\leftarrow\mathcal{D}}[h(x)\neq c(x)]\leq\epsilon.
\end{equation*}
We say such an algorithm is a \emph{PAC-learner} if it is an $(\epsilon,\delta,O,\mathcal{D})$-PAC learner for every $\epsilon,\delta$, and arbitrary (unknown) distribution $\mathcal{D}$. We note that while the general form of PAC learning requires that the learner can output such a hypothesis when given examples drawn from \emph{any} (unknown) distribution $\mathcal{D}$, sometimes, we will be interested in the case where the examples are drawn from a specific pre-defined distribution $\mathcal{D}$: an algorithm only required to succeed in this case will be said to be a \emph{PAC-learner with respect to the distribution $\mathcal{D}$}. We define the \emph{sample complexity} of the learner $\mathcal{A}$ as the number of samples taken by the algorithm, and we call the learner \emph{efficient} if it runs in time polynomial in $n,1/\varepsilon$, and $\log(1/\delta)$.
\end{definition}
We will be specifically interested in the case of \emph{uniform} PAC learners, that is, PAC-learners with respect to the uniform distribution. We will also be concerned about PAC GEN learning of distributions, defined below. First, we need the concept of (approximate) generator:
\begin{definition}[Approximate generator]
Fix a distance measure $d$ between probability distributions, and $\epsilon >0$. An algorithm is said to be \emph{a $(d,\epsilon)$-generator for a probability distribution $\mathcal{D}$} if it is computationally efficient, and the distribution  $\mathcal{D}'$ of its output satisfies $d(\mathcal{D},\mathcal{D'})\leq\epsilon$.
\end{definition}
\begin{definition}[PAC GEN Learning of Distributions]\label{def:pac_gen_learning}
    Fix a distance measure $d$ between probability distributions. A learning algorithm $\mathcal{A}$ is an \emph{$(\epsilon,\delta,O,d)$-PAC GEN-learner for a distribution class $\mathcal{C}$}, if for all $\mathcal{D}\in \mathcal{C}$, when given oracle $O(\mathcal{D})$, which outputs a sample according to the distribution $\mathcal{D}$, with probability $1-\delta$, the learner $\mathcal{A}$ outputs a (classical) $(d,\epsilon)$-generator $GEN_{\mathcal{D}'}$ for $\mathcal{D}$. The generator $GEN_{\mathcal{D}'}$ has access to all the resources that the original learning algorithm $\mathcal{A}$ has, except the example oracle $O(\mathcal{D})$. Namely, if distribution class is induced by a function $g$ encoded via an oracle, the outputted generator may still use calls to the oracle $g$.
    We further say the learning algorithm \emph{efficient} if it outputs the generator in polynomial time.
\end{definition}
The notion of distance we will adopt in this paper is the total variational (TV) distance, which we expand on in \cref{sec:tv_distance}. We emphasize that the above definition provides the outputted generator $GEN_{\mathcal{D}'}$ with the same ``resources'' as the learning algorithm $\mathcal{A}$ uses, \emph{except for the example oracle itself}. This is crucial in this paper, as our hard instances rely two oracles, $O(\mathcal{D})$ (the example oracle) and oracle access to some auxiliary function: the generating algorithm will still have access to the latter, but not the former.

Thus, \cref{def:pac_learning_functions} is concerned with learning functions, while \cref{def:pac_gen_learning} is concerned with learning distributions. Accordingly, we will refer to algorithms learning functions as \emph{PAC learners}, or simply \emph{learners}, and algorithms generating distributions as \emph{PAC generators}, or simply \emph{generators}. Further, notice that the definitions do not make any distinction between quantum and classical algorithms. For the purposes of this paper, we will assume that all learners have {quantum computation}. Hence, when we refer to ``classical learners,'' we mean ``quantum algorithms with \emph{classical examples},'' and ``quantum learners'' as ``quantum algorithms with \emph{quantum examples}.''

Finally, we also note that every boolean function on $n$ variables naturally induces a distribution $D_f\in\mathcal{D}_{n+1}$, the set of distributions on $n+1$ variables, via \emph{distribution classes}.
\begin{definition}[Distribution Class]\label{def:distribution_class}
    Given a boolean function $f\in\mathcal{F}_n$, we define the distribution $D_f\in\mathcal{D}_{n+1}$ as the distribution that is uniform over $(n+1)$-bit strings of the form $x||f(x)$. Additionally, given a concept class $\mathcal{C}\subseteq \mathcal{F}_n$, we define the distribution class $\mathcal{D}_{\mathcal{C}}\subseteq\mathcal{D}_{n+1}$ as
    \begin{equation*}
        \mathcal{D}_{\mathcal{C}}=\{D_c \mid c\in\mathcal{C}\}.
    \end{equation*}
\end{definition}
\subsection{Total Variation Distance}\label{sec:tv_distance}
In this section, we formally define and give properties on TV distance. Intuitively, one can think of the TV distance as the maximal difference between how two probability distributions assign probabilities to (sets of) events. We here focus on discrete domains, for simplicity, and as this is the focus of our work.
\begin{definition}[TV Distance]
    Let $\mathcal{D}_1$ and $\mathcal{D}_2$ be two probability distributions defined over $S$. Then, the TV distance between $\mathcal{D}_1$ and $\mathcal{D}_2$ is given by
    \begin{equation*}
        \TV(\mathcal{D}_1,\mathcal{D}_2)
        = \sup_{E\subseteq S} (\mathcal{D}_1(E)-\mathcal{D}_2(E))
        =\frac{1}{2}\sum_{x\in S}|\mathcal{D}_1(x)-\mathcal{D}_2(x)|.
    \end{equation*}
\end{definition}
One can check that the TV distance is a metric, and takes values in $[0,1]$. 
We also use the following standard alternate form of the TV distance:
\begin{lemma} For any two probability distributions $\mathcal{D}_1,\mathcal{D}_2$ over $S$, we have
    \begin{equation*}
        \TV\mathcal({D}_1,\mathcal{D}_2)=1-\sum_{x\in S}\min(\mathcal{D}_1(x),\mathcal{D}_2(x)).
    \end{equation*}
\end{lemma}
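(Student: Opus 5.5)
We derive the identity directly from the summation form of the TV distance given in the definition. The key observation is a pointwise identity for nonnegative reals: for any $a,b\ge 0$,
\begin{equation*}
    |a-b| = a + b - 2\min(a,b).
\end{equation*}
To verify it, assume without loss of generality that $a\ge b$. Then $\min(a,b)=b$, and the right-hand side equals $a+b-2b=a-b=|a-b|$.

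We apply this identity with $a=\mathcal{D}_1(x)$ and $b=\mathcal{D}_2(x)$ for each $x\in S$, and sum over $x$. Since $S$ is discrete and each distribution sums to one, the sum splits termwise:
\begin{equation*}
    \frac{1}{2}\sum_{x\in S}|\mathcal{D}_1(x)-\mathcal{D}_2(x)|
    = \frac{1}{2}\Big(\sum_{x\in S}\mathcal{D}_1(x)+\sum_{x\in S}\mathcal{D}_2(x)\Big) - \sum_{x\in S}\min(\mathcal{D}_1(x),\mathcal{D}_2(x)).
\end{equation*}
Both inner sums equal $1$, so the right-hand side is $1-\sum_{x\in S}\min(\mathcal{D}_1(x),\mathcal{D}_2(x))$, which is the claimed formula.

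The splitting step needs a short justification when $S$ is countably infinite. All terms are nonnegative, each of the three series converges absolutely (the sum of minima is bounded by $1$), and so they may be separated without issue. This is the only point requiring care; the argument is otherwise routine.

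If one prefers to start from the supremum form of the definition, take $E=\{x:\mathcal{D}_1(x)>\mathcal{D}_2(x)\}$. This set attains the supremum, and
\begin{equation*}
    \mathcal{D}_1(E)-\mathcal{D}_2(E) = \sum_{x\in E}\bigl(\mathcal{D}_1(x)-\min(\mathcal{D}_1(x),\mathcal{D}_2(x))\bigr),
\end{equation*}
because on $E$ the minimum is $\mathcal{D}_2(x)$. Off $E$, $\mathcal{D}_1(x)=\min(\mathcal{D}_1(x),\mathcal{D}_2(x))$, so those terms contribute zero and the sum extends to all of $S$. It therefore equals $1-\sum_{x\in S}\min(\mathcal{D}_1(x),\mathcal{D}_2(x))$, giving the same result.
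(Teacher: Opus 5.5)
Your proof is correct: summing the pointwise identity $|a-b| = a+b-2\min(a,b)$ over $S$ gives the claim at once, and your alternative via $E=\{x : \mathcal{D}_1(x)>\mathcal{D}_2(x)\}$ also works. That $E$ attains the supremum is the standard observation that $\sum_{x\in E'}(\mathcal{D}_1(x)-\mathcal{D}_2(x))$ is maximized by keeping exactly the positive terms. The paper gives no proof of its own here and simply defers to a standard textbook reference; your argument is the usual one found there.
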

\noindent See, e.g., \cite[Chapter 11]{DBLP:books/daglib/0012859} for more on total variation distance.
\subsection{Oracle Definitions}\label{sec:oracles}
In this section we provide more detail on what we mean by access to different types of examples. We will use three types of oracle access to the examples, which we define below.
\begin{definition}[Oracles]\label{def:oracles}
    The three oracles we are concerned about are the $\sample$ oracle, the quantum equivalent $\qsample$ oracle, and the $\pex$ oracle. We note for all three oracles, each call is independent across executions.
    \begin{enumerate}
        \item Given a distribution $\mathcal{D}$, the $\sample$ oracle returns some $x$ drawn from $\mathcal{D}$. The outputs across different calls of the oracle are independent.
        \item Given a distribution $\mathcal{D}$, the $\qsample$ oracle returns a superposition of all elements $x$ in the distribution, $\sum_x\sqrt{\mathcal{D}(x)}|x\rangle$.
        \item Given a distribution $\mathcal{D}$ and a function $f$, the random example oracle $\pex$ first samples $x$ according to the distribution $\mathcal{D}$, then returns the tuple $(x,f(x))$. The outputs across different calls of the oracle are independent.
    \end{enumerate}
\end{definition}
That is, the $\sample$ oracle returns a random sample drawn from the distribution $\mathcal{D}$, and the $\qsample$ oracle returns a ``quantum example'' from the same distribution, which is a superposition according to the weights of the distribution.\footnote{In particular, measuring this state in the computational basis yields a classical sample from the distribution $\mathcal{D}$.} On the other hand, the $\pex$ oracle is an example oracle, which samples an input $x$ according to the distribution $\mathcal{D}$, and returns the tuple consisting of the sample $x$ and the evaluation $f(x)$.

\subsection{One-way functions}
Our first result will rely on the use of \emph{one-way permutations}: in this subsection, we recall some cryptographic definitions and relevant background, and refer the interested reader to \cite[Chapter 1]{DBLP:books/cu/Goldreich2001} for a thorough coverage. We start with the definition of a \emph{one-way function}.
\begin{definition}[One-way functions]\label{def:one_way_function}
    A function $f\colon\{0,1\}^n\rightarrow\{0,1\}^n$ is a \emph{one-way function} if:
    \begin{enumerate}
        \item It is \emph{easy to compute}. That is, there exists a deterministic polynomial-time algorithm that on input $x$, outputs $f(x)$.
        \item It is \emph{hard to invert}. That is, for every probabilistic polynomial-time (PPT) algorithm $\mathcal{A}$, every polynomial $p(n)$ and all sufficiently large $n$,
        \begin{equation*}
            \Pr_{x\leftarrow\{0,1\}^n}[\mathcal{A}(1^n,f(x))\in f^{-1}(f(x))]<\frac{1}{p(n)}.
        \end{equation*}
        That is, for most $y=f(x)$, no PPT algorithm can find any preimage of $y$ with non-negligible probability.
    \end{enumerate}
\end{definition}
A \emph{one-way permutation} is then an injective one-way function. Given a one-way function, we can also define the related \emph{hardcore predicate}.
\begin{definition}[Hardcore Predicate]\label{def:hardcore_predicate}
    A polynomial computable predicate $b:\{0,1\}^*\rightarrow\{0,1\}$ is a \emph{hardcore} of a function $f$ if for every PPT algorithm $A'$, every polynomial $p(\cdot)$, and all sufficiently large $n$,
    \begin{equation*}
        \Pr_x[A'(f(x))=b(x)]<\frac{1}{2}+\frac{1}{p(n)}.
    \end{equation*}
\end{definition}
We know of the existence of a specific form of hardcore predicate which we will use, for example, by \cite[Theorem 2.5.2]{DBLP:books/cu/Goldreich2001}. Any one way function can be converted into the \emph{hardcore} form.
\begin{theorem}\label{thm:goldreich_hardcore_predicate}
    Let $f$ be a one-way function,and define $g(x,r)$ as the tuple $g(x,r)=(f(x),r)$, where $r$ is a string such that $|x|=|r|$. Let $b(x,r)$ denote the inner product of the binary vectors $x$ and $r$. Then the predicate $b$ is a hardcore of the function $g$.
\end{theorem}
It is worth noting the string $r$ is not kept hidden from the adversary. We then have the following standard result, which shows that predicting a predicate implies inverting the one-way function: i.e., the predicate captures the computational hardness. 
\begin{theorem}[Goldreich-Levin, \cite{DBLP:conf/stoc/GoldreichL89}]\label{thm:hardness_of_inverting}
    If there exists a PPT algorithm that predicts $b(x,r)$ when given a tuple $(f(x),r)$ with with probability $\frac{1}{2}+\eta$, then there exists an algorithm that inverts $f$ with probability at least $\poly(\eta)/\poly(n)$, and runs in time $\poly(n,1/\eta)$. We refer to $\eta$ as the \emph{advantage}, and refer to it as non-negligible if it is at least $1/\poly(n)$ in magnitude.
\end{theorem}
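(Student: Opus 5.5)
This is the Goldreich--Levin theorem, and I would prove it by the standard list-decoding route. The setting: we have a predictor $A$ with $\Pr_{x,r}[A(f(x),r)=\langle x,r\rangle]\ge \frac12+\eta$, and we want to recover $x$ from $y=f(x)$.

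\textbf{Step 1: restrict to good $x$.} Call $x$ \emph{good} if $\Pr_r[A(f(x),r)=\langle x,r\rangle]\ge\frac12+\frac{\eta}{2}$. By averaging (Markov applied to the failure probability), at least an $\eta/2$ fraction of $x$ are good. It then suffices to invert $f(x)$ for good $x$ with probability $\Omega(1)$, or even $1/\poly(n,1/\eta)$.

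\textbf{Step 2: recover each bit of a good $x$ by pairwise-independent guessing.} Fix a good $x$ and set $k=\lceil\log(2n/\eta^2+1)\rceil$.
\begin{itemize}
\item Pick uniform $s_1,\dots,s_k\in\{0,1\}^n$ and guess the bits $\sigma_j=\langle x,s_j\rangle$. All guesses are correct with probability $2^{-k}=\Omega(\eta^2/n)$.
\item For each nonempty $J\subseteq[k]$, set $r_J=\sum_{j\in J}s_j$ and $\sigma_J=\sum_{j\in J}\sigma_j$. If the guesses are correct, then $\sigma_J=\langle x,r_J\rangle$, and the $r_J$ are pairwise independent and uniform.
\item To recover bit $i$, take the majority over $J$ of $\sigma_J+A(f(x),r_J+e_i)$. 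Each vote equals $x_i$ exactly when $A$ is correct on the uniform point $r_J+e_i$, which happens with probability at least $\frac12+\frac\eta2$, and the votes are pairwise independent.
\item By Chebyshev over the $2^k-1\ge 2n/\eta^2$ votes, the majority is wrong with probability at most $1/(2n)$. A union bound over $i\in[n]$ then recovers all of $x$ with probability at least $1/2$.
\end{itemize}
To remove the need to know which guess is right, iterate over all $2^k=\poly(n/\eta)$ guess vectors and check each candidate $x'$ by testing whether $f(x')=y$; this uses that $f$ is efficiently computable.

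\textbf{Step 3: account for probability and time.} The inverter succeeds with probability at least $\frac{\eta}{2}\cdot\frac12$, which is within the claimed $\poly(\eta)/\poly(n)$ bound. The running time is $2^k\cdot n\cdot 2^k$ calls to $A$ plus $2^k$ evaluations of $f$, which is $\poly(n,1/\eta)$.

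The main obstacle is Step 2's reliance on only pairwise independence of the $r_J$. Because of this, the error per bit must be controlled by Chebyshev rather than Chernoff, which forces the $\poly(1/\eta)$ sample size. One must also verify carefully that $r_J+e_i$ is uniform for each fixed $J$, so that $A$'s success probability on a good $x$ applies to every vote.
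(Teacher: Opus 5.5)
The paper gives no proof of this statement (it is quoted from Goldreich--Levin, with Goldreich's textbook cited nearby), and your argument is the standard pairwise-independence (Rackoff) proof from those sources. It is correct, and it actually gives inversion probability at least $\eta/4$, which is stronger than the stated $\poly(\eta)/\poly(n)$.

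One small correction to your closing remark: the $1/\eta^2$ factor in the number of votes would be needed even with fully independent samples. Chebyshev only costs you a factor $n$ in place of $\log n$, and pairwise independence is exactly what keeps the number of guesses at $2^k=\poly(n/\eta)$ rather than exponential in the number of votes.
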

Hence, if $\eta$ non-negligible, there exists a PPT algorithm which can invert the hardcore predicate. We conclude the section by noting that it is not known whether one-way functions exist; in fact, showing their existence would imply $\text{\textbf{P}}\neq\text{\textbf{NP}}$. Furthermore, the existence of one-way functions alone also does not imply the existence of one-way permutations, and the latter is a stronger assumption. However, their existence is widely conjectured, and constitutes a standard cryptographic assumption.

\section{Function Hardness Does Not Imply Distributional Hardness}\label{sec:function_hardness_does_not_imply_distributional_hardness}
\subsection{Context Of Prior Work by Sweke \etal}\label{sec:sweke_context}
In this section, we summarize a potential strategy given by Sweke \etal~that might be used to prove a separation between quantum and classical PAC learning. One key result they were able to prove was that if a concept class $\mathcal{C}$ is efficiently PAC-learnable with respect to the uniform distribution, then the induced distribution class $\mathcal{D}_C$ is also efficiently PAC-GEN learnable.
\begin{theorem}[Function Learnability Implies Distribution Generation, \cite{DBLP:journals/quantum/SwekeSHE21}]\label{thm:function_learnability_implies_distribution_learnability}
    If a concept class $\mathcal{C}$ is efficiently classically (resp. quantumly) PAC learnable with respect to the uniform distribution and the $\pex$ oracle, then the distribution class $\mathcal{D}_\mathcal{C}$ is efficiently classically (resp. quantumly) PAC GEN-learnable with respect to the $\sample$ oracle and TV distance.
\end{theorem}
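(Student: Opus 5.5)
The plan is a direct reduction: run the function learner, using $\sample$ draws from $D_c$ as its $\pex$ examples, and then turn the resulting hypothesis into a generator. The key observation is that a sample from $D_c$ is a string $x\|c(x)$ with $x$ uniform over $\{0,1\}^n$. This is exactly a $\pex$ example $(x,c(x))$ for the uniform distribution. Hence a $\sample$ oracle for $D_c$ simulates the $\pex$ oracle for $(c,\mathcal{U}_n)$ perfectly, one call each, at no extra cost.

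The construction is as follows. Given target accuracy $\epsilon$ and confidence $\delta$, the generator-learner runs the assumed efficient uniform PAC learner $\mathcal{A}$ with parameters $(\epsilon,\delta)$. Whenever $\mathcal{A}$ asks for a $\pex$ example, we draw from $\sample(D_c)$, split off the last bit, and hand $\mathcal{A}$ the pair. With probability at least $1-\delta$, $\mathcal{A}$ outputs a hypothesis $h$ with $\Pr_{x\leftarrow\mathcal{U}_n}[h(x)\neq c(x)]\le\epsilon$. We then output the generator $GEN$ that samples $x$ uniformly and returns $x\|h(x)$, so its output distribution is $D_h$. Efficiency is inherited: the learning phase costs the same as $\mathcal{A}$. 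The generator costs one evaluation of $h$ plus $n$ coin flips. For this we use that an efficient learner outputs an efficiently evaluable hypothesis, for instance a polynomial-size classical or quantum circuit. In the quantum case, we note that $h$ may be a randomized quantum procedure; this is handled below.

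The remaining step is the TV computation. Both $D_c$ and $D_h$ put mass $2^{-n}$ on each $x$, split across the last bit. If $h$ is deterministic, $D_c$ and $D_h$ agree on every $x$ with $h(x)=c(x)$. On every $x$ with $h(x)\neq c(x)$ they have disjoint support, and that $x$ contributes $2^{-n}$ to the TV distance. By the alternate form of TV,
\begin{equation*}
\TV(D_c,D_h)=\Pr_{x}[h(x)\neq c(x)]\le\epsilon .
\end{equation*}
If $h$ is randomized, write $p_x=\Pr[h(x)\neq c(x)]$ over the internal randomness of $h$. The same computation gives $\TV=2^{-n}\sum_x p_x=\Pr_{x,h}[h(x)\neq c(x)]\le\epsilon$, where the last inequality is the PAC guarantee interpreted with $h$'s randomness included.

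I expect no real obstacle here. The only point requiring care is matching definitions, namely that the hypothesis class output by an efficient learner is efficiently samplable. This holds because the learner runs in polynomial time, so it can only write down a polynomial-size description of $h$, which can then be run. The quantum and classical cases are identical, since the reduction never inspects how $\mathcal{A}$ computes.
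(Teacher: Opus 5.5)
Your proposal is correct and is the standard argument for this result (the paper cites Sweke \etal{} and gives no proof of its own): a $\sample$ draw from $D_c$ is exactly a uniform $\pex$ example. So you run the learner, output the generator returning $x\|h(x)$ for uniform $x$, and get $\TV(D_c,D_h)=\Pr_x[h(x)\neq c(x)]\le\epsilon$.

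The one step to tighten is your justification of evaluability. A polynomial-size description need not be efficiently runnable (e.g.\ ``the first concept in $\mathcal{C}$ consistent with these stored examples''), so you must invoke the standard requirement that efficient learners output polynomially evaluatable hypotheses such as circuits, which the paper adopts explicitly after \cref{thm:function_hardness_does_not_imply_distribution_hardness}. This requirement is essential: without it the statement fails, for instance for pseudorandom function classes.
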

They left the reverse direction open as a conjecture.
\begin{conjecture}[Function Hardness Implies Distribution Hardness, \cite{DBLP:journals/quantum/SwekeSHE21}]\label{conj:function_hardness_implies_distribution_hardness}
    If a concept class $\mathcal{C}$ is not efficiently classically (quantumly) PAC learnable with respect to the uniform distribution and the $\pex$ oracle, then the distribution class $\mathcal{D}_\mathcal{C}$ is not efficiently classically (quantumly) PAC GEN-learnable with respect to the $\sample$ oracle and TV distance.
\end{conjecture}
The hope was that if this conjecture was true, and one could find a boolean concept class $\mathcal{C}$ such that:
\begin{enumerate}
    \item $\mathcal{C}$ is not efficiently classically PAC learnable, with respect to the uniform distribution and $\pex$ oracle, and
    \item $\mathcal{C}$ is efficiently quantumly learnable, with respect to the uniform distribution, and the $\pex$ oracle,
\end{enumerate}
then this would immediately imply that the induced distribution class $\mathcal{D}_\mathcal{C}$ would not be efficiently classically PAC GEN-learnable, yet would be efficiently quantumly PAC GEN-learnable. Though it is currently not known whether a concept class does satisfy the above two criteria, proving the reverse direction of this conjecture true would allow one to instantly leverage such a concept class, if it exists, into a PAC GEN-learning separation. Then, should this hold true, Sweke \etal~conjectured that one could potentially modify both of the above results to prove hardness separations for classical and quantum \emph{samples}, instead of classical and quantum computation separations, for boolean concept classes and their induced distributions.

In the remainder of this section, we establish~\cref{inf:function_hardness_does_not_imply_distribution_hardness} (formally restated as~\cref{thm:function_hardness_does_not_imply_distribution_hardness}), providing strong evidence that this reverse direction does not hold: i.e., we construct an explicit counterexample showing this conjecture to be false relative to an oracle.

\subsection{A Counterexample to \cref{conj:function_hardness_implies_distribution_hardness}}
Assuming the existence of one-way permutations, we show~\cref{conj:function_hardness_implies_distribution_hardness} to be false by explicitly constructing a function class which is not efficiently learnable, but induces a distribution that is efficiently PAC-gen learnable (and in fact, exactly so). First, note given any function $f(x)$, one can naturally define a distribution class given by the tuples $(x,f(x))$, via \cref{def:distribution_class}. We thus create a distribution class consisting of a family of functions $\{f(x)\}$, which then induces a family of distributions over tuples $\{(x,f(x)\}$, and show whilst the former cannot be efficiently learned, the latter can be efficiently generated.
\begin{theorem}\label{thm:function_hardness_does_not_imply_distribution_hardness}
    Assume one-way permutations exist, and let $\mathcal{W}$ be a (classical) oracle providing query access to a one-way permutation. Then there is a concept class $\mathcal{C}$ that can be efficiently generated with respect to the $\sample$ oracle and TV-distance, but is not efficiently learnable with respect to the uniform distribution and the $\pex$ oracle.
\end{theorem}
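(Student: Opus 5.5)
Let $g\colon\{0,1\}^n\to\{0,1\}^n$ be the one-way permutation accessed through $\mathcal{W}$. Following the techniques overview, I would take the concept class $\mathcal{C}=\{B_r : r\in\{0,1\}^n\setminus\{0^n\}\}$, where $B_r(y)=\langle g^{-1}(y),r\rangle \bmod 2$. The induced distribution $D_{B_r}$ is uniform over strings $y\,||\,B_r(y)$. The proof then splits into two parts: an efficient generator for $\mathcal{D}_\mathcal{C}$ with respect to $\sample$, and a reduction from any efficient uniform PAC learner for $\mathcal{C}$ to an inverter for $g$, via \cref{thm:hardness_of_inverting}.

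\textbf{Generation.} The learner first has to recover $r$. Each sample $y\,||\,B_r(y)$ gives no direct linear information about $r$, because the learner cannot compute $g^{-1}(y)$. I would fix this by augmenting the concept class so that the input carries usable structure. One option is a class indexed by $r$ whose description can be recovered from samples, for instance by placing $r$ itself in part of the domain's labelling. A cleaner option is to make the generator not need $r$ at all in the usual sense, and instead only require a description from which outputs are reproducible. The choice I would try is $B_r(y)=\langle g^{-1}(y),r\rangle$ with the learner estimating $r$ by Gaussian elimination on samples. To make this possible, I would have the oracle (or the concept) also reveal pairs $(s,\langle s,r\rangle)$. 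Concretely, I would define concepts on $(n+1)$-bit inputs $(b,y)$: for $b=0$ output $\langle y,r\rangle$, and for $b=1$ output $B_r(y)$. Half of the samples are then linear equations in $r$, so $O(n+\log(1/\delta))$ samples determine $r$ with probability $1-\delta$.

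With $r$ known, the generator works as follows. It samples $b$ and $s$ uniformly at random and queries $\mathcal{W}$ for $t=g(s)$. If $b=1$ it outputs $(1,t)\,||\,\langle s,r\rangle$, and if $b=0$ it outputs $(0,s)\,||\,\langle s,r\rangle$. Since $g$ is a permutation, $t$ is uniform, and $g^{-1}(t)=s$, so the output distribution equals $D_{B_r}$ exactly and the TV distance is $0$.

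\textbf{Hardness of learning.} Suppose an efficient learner, given $\pex$ samples under the uniform distribution, outputs $h$ with error at most $\epsilon$ (say $\epsilon=1/8$) with probability $1-\delta$. I would build a predictor for $b(x,r)$ given $(g(x),r)$. The predictor can simulate the example oracle itself, since it knows $r$ and can compute $g$. It draws a uniform $s$ and either returns $((0,s),\langle s,r\rangle)$ or queries $t=g(s)$ and returns $((1,t),\langle s,r\rangle)$, which is a perfect simulation of $\pex$. It then runs the learner to get $h$ and outputs $h(1,g(x))$.

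The error on the $b=1$ half is at most $2\epsilon$. However, the challenge $g(x)$ is uniform over $y$ for a fixed $r$, while \cref{thm:hardness_of_inverting} averages over uniform $r$ as well. Averaging the learner's guarantee, which holds for every $r$, gives success probability at least $(1-\delta)(1-2\epsilon)\ge 1/2+\eta$ for constant $\eta$. \cref{thm:hardness_of_inverting} then yields an efficient inverter of $g$, a contradiction; because the whole reduction is oracle-relative, both the learner and the inverter use $\mathcal{W}$. The concept $r=0$ is excluded, or it is harmless since it has measure $2^{-n}$. If quantum learners are allowed, the reduction needs quantum-secure one-way permutations, and I would note this assumption explicitly.

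\textbf{Main obstacle.} The delicate step is making $r$ learnable from samples while preserving the hardness of learning. The $b=0$ half must leak $r$ without helping to predict $B_r$ on fresh $y$. The argument above handles this because the reduction simulates everything knowing $r$ anyway. The remaining care is confirming that the modified class still matches the intended statement and that error bounds on the $b=1$ half transfer correctly.
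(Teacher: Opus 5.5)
Your proposal is correct and, once you settle on the augmented class on inputs $(b,y)$, it is exactly the paper's construction and argument: the $b=0$ half reveals $r$ by Gaussian elimination, the generator samples $s$ and outputs $(1,g(s))$ labelled by $\langle s,r\rangle$ (or $(0,s)$ with the same label), and hardness comes from the Goldreich--Levin hardcore predicate. Your explicit reduction, which simulates the $\pex$ oracle yourself (drawing $s$, querying $g(s)$, labelling with $\langle s,r\rangle$) and then evaluates $h(1,g(x))$, is a more careful version of the step the paper states informally, where it bounds any efficient hypothesis's accuracy by $3/4+1/p(n)$.
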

We make a slight alteration to the PAC learning requirement in \cref{def:pac_learning_functions}, strengthening the outputted hypothesis. Recall in the classical definition, the learner must output a hypothesis $h(x)$ that only errors with probability $\epsilon$. However, we will allow the learner to output a function $h(\mathcal{W},x)$ that can be efficiently evaluated. In other words, the learner can output a circuit, which in addition to the input, can make a polynomial number of calls to the permutation oracle, with the aim that this circuit only errors with probability $\epsilon$. The reason for doing so is that our PAC GEN learner relies on access to this permutation oracle to be efficient. If we did not give the hypothesis function the same type of access, one might argue that the separation is because the generator has access to additional resources, being oracle queries in its generating process, whilst the hypothesis output does not. In this case, we allow the learner to output something comparable, and prove that even if the learner is able to output a hypothesis which can incorporate a polynomial amount of oracle calls, \cref{thm:function_hardness_does_not_imply_distribution_hardness} still holds. We now move to the proof.
\begin{proof}
    Notice all queries are classical under the given oracle access. We will show the result for classical computers, and the argument for quantum computers is identical (as discussed in the remarks after the proof). We let $g:\{0,1\}^{n-1}\rightarrow \{0,1\}^{n-1}$ be a one-way permutation to which the algorithms have oracle access. In other words, there is an oracle which the algorithm can query at any time with a string $x$, and the oracle will return the mapping of $x$ under the one-way permutation $g(x)$. Next, let $r\in\{0,1\}^{n-1}$ be a uniformly sampled nonzero vector, and let $B_r(x):\{0,1\}^{n-1}\rightarrow\{0,1\}$ be the function $B_r(x)=\langle x,r\rangle\bmod 2$, the hardcore predicate of the function $l(x)=(g(x),r)$. Note that, by \cref{thm:hardness_of_inverting}, being given $l$ and inverting $B_r$ is equivalent to inverting $g$. Given a specified one-way permutation $g$, and a randomly sampled $r$, we define our boolean function $f:\{0,1\}^n\rightarrow\{0,1\}$ as follows. For ease of representation, instead of taking the input as an $n$-bit string, we will break the input into two strings $t\in\{0,1\}^{n-1}$ and $b\in\{0,1\}$, so that $x=t||b$. Then $f$ is given by:
    \begin{equation*}
        f(x)=f(t||b)=\begin{cases}
            B_r(t)=\langle t, r\rangle\mod 2,&\text{if }b=0,\\
            B_r(g^{-1}(t))=\langle g^{-1}(t),r\rangle\mod 2,&\text{if }b=1.
        \end{cases}
    \end{equation*}
    Then, given a one-way permutation $g$, our concept class $\mathcal{C}$ is all such $f$ obtained from the one-way permutation $g$ and the $2^{n-1}$ possible random strings $r$. The corresponding distribution generated is the uniform one over strings $x||f(x)$.

    First, notice that in both the learning and generating setting, the algorithm can efficiently learn the hidden string $r$: indeed, to do so, it suffices to find $n-1$ linearly independent nonzero strings $t$ with $b=0$ in their random samples, and then solve a system of equations; with overwhelming probability, the former will occur in a polynomial amount of samples. However, we argue that even though learning $r$ is easy, it is hard to learn the function $f$. To see this, assume an algorithm was able to predict $f(x)$ with non-negligible advantage. Then, they must be able to predict $f(x)$ for strings where $b=1$ with non-negligible advantage. Under the $\pex$ oracle given to the learner, the algorithm is provided with an example $g(t)$ by taking the first $n-1$ bits of $x$. Combined with having already learned $r$, this gives them a tuple $(g(t),r)$. Predicting $f(x)$ with advantage in this case means that they can predict $B_r(g^{-1}(t))$ with advantage. To see why, assume an algorithm was able to predict $f$ with non-negligible advantage over a uniformly random choice of $x=t||b$. Then, as we will make rigorous shortly, they must in particular be able to predict $f(x)$ for strings of the form $x=t||1$ (i.e., with $b=1$), which, even knowing $r$, means predicting with non-negligible advantage the hardcore predicate $\langle s, r\rangle$ of $g$ as defined in \cref{thm:goldreich_hardcore_predicate}, given a tuple $(g(s),r)$ (where $s \coloneqq g^{-1}(t)$).~--~contradicting that $g$ is a one-way permutation. (Note that even a polynomial number of samples from the permutation oracle $\mathcal{W}$ would still not help, unless by chance the preimage of $t$ was in these samples: which only happens with probability $o(1)$, where the $o(1)$ term decays exponentially in $n$.) We now  proceed to formalize this intuition.
    
    The fact that $g$ is a one-way permutation implies, for any efficient hypothesis $h(\mathcal{W},x)$ that the learner produces, the probability that it agrees with the hardcore predicate is bounded by $\Pr_x[h(\mathcal{W},x)=b(x)]<\frac{1}{2}+\frac{1}{p(n)}$ for every $p(n)$. The hypothesis is required to predict the hardcore predicate half of the time, where the bit string ends in 1. Hence, we get that
    \begin{equation*}
        \Pr[h(\mathcal{W},x)=f(x)]\leq \frac{3}{4}+\frac{1}{p(n)},
    \end{equation*}
    where the probability is taken over the uniform distribution. As such, this cannot be efficiently PAC learned for $\epsilon<0.25$.

    However, notice the generator has an effective way of generating this distribution, and in fact, it can generate the \emph{exact} distribution. First, it learns $r$. Second, it samples the random strings $t$ and $b$. If $b=0$, it computes $B_r(t)$, having learned the string $r$. Then, it can output $t||0||B_r(t)$ as a sample. Otherwise, if $b=1$, it can feed the string $t$ into the one-way permutation oracle, to obtain the string $s\coloneqq g(t)$: it can now compute $B_r(t)=B_r(g^{-1}(s))$, and output the string $s||1||B_r(t)$. Since $g$ is a one-way permutation, uniformly sampling $t$ is equivalent to uniformly sampling $s$. The learning algorithm just needs to learn $r$ (which, as outlined above, can be done with high probability using a polynomial amount of queries) before outputting a generator $GEN$ which samples from the distribution in constant time (and at most one call to the permutation oracle) as described above. This proves the theorem for classical computers.
\end{proof}
We make three concluding remarks here. First, note that even when given quantum computation, one-way functions are still assumed to be hard to invert. As such, the above results also generalize to hold even for quantum algorithms. Furthermore, while our construction only assumes classical access to the one-way permutation, the results also hold when the generating algorithms are given \emph{both} quantum and classical access to the one-way permutation (since one-way permutations are still assumed to be hard to invert given this type of access). By quantum access, we mean being given access to a uniform superposition $\sum_x|x\rangle|g(x)\rangle$ over the output of $g$. Note though, as part of our construction, the generating algorithm needs to explicitly evaluate a uniformly sampled input in the one-way permutation, and thus requires at least classical access: having \emph{only} quantum access to the one-way permutation does not suffice for our argument. Finally, we explicitly need one-way permutations, and not just one-way functions, since we need the preimage of a uniformly random element to also be uniformly distributed. We leave it as an open question to relax this requirement to only needing one-way functions, or indeed a separate construction that does not require this at all.

\section{A Quantum/Classical Separation in PAC GEN Learning}\label{sec:quantum/classical_example_separations}
After having provided strong evidence that \cref{conj:function_hardness_implies_distribution_hardness} is false in the previous section, we now establish a separation between quantum and classical examples. At a high-level, our approach for this second separation is as follows. First, we sample a random boolean function $g$, afterwards available to the algorithm as an oracle. Then, we uniformly at random sample a nonzero vector $a$, and create a function $f$ with period $a$ by defining $f(x)=g(x)+g(x+a)$, where $+$ is bitwise XOR (bitwise mod 2 addition). Now, while $f$ clearly has $a$ as period, it might also have pseudoperiods,\footnote{A pseudoperiod for an input $x$ (and an $a$-periodic function $f$) is any string $t\neq a$ such that $f(x)=f(x+t)$.} and is not guaranteed to be perfectly 2-to-1. Nevertheless, we show that the number of pseudoperiods in this function is low (with high probability), which as a result means we can still use Simon's algorithm to recover the period with high probability \cite{DBLP:conf/crypto/KaplanLLN16}. We proceed with formal definitions and proofs now.
\subsection{Probably Approximate Simon's (PAS) Functions}
We begin by defining the main building block of our concept class.
\begin{definition}[Probably Approximate Simon's Function]\label{def:PAS}
    Let $g:\{0,1\}^n\rightarrow\{0,1\}^n$ be a u.a.r function, and $a$ be a uniformly sampled random vector. We can construct a function $f$ by imposing $g$ onto itself via a shift of $a$, $f(x)=g(x)+g(x+a)$. We call such a function $f$ a \emph{probably approximate Simon's function} (PAS function).
\end{definition}
As previously discussed,t this function is guaranteed to have period $a$ (i.e., $f(x)=f(x+a)$ for every $x$), but may also contain more collisions of this form for specific $x$. We can obtain a measure of these unwanted collisions via the \emph{pseudoperiod factor}.
\begin{definition}[Pseudoperiod Factor]\label{def:pseudoperiod_factor}
    Given a PAS function with period $a$, the \emph{pseudoperiod factor} $\varepsilon(f,a)$ is the quantity
    \begin{equation}
        \varepsilon(f,a)=\max_{t\in\{0,1\}^n\setminus\{0,a\}}\Pr_x[f(x)=f(x+t)].
    \end{equation}
    We also say that input $x$ admits a pseudoperiod $t$ if $f(x)=f(x+t)$.
\end{definition}
This parameter was introduced by Kaplan \etal~\cite{DBLP:conf/crypto/KaplanLLN16} to quantify how far a function is from being perfectly 2-to-1. We will prove the following.
\begin{lemma}\label{lemma:pseudoperiod_bound}
    Let $g\colon\{0,1\}^n\to\{0,1\}^n$ be a u.a.r.\ function. For each nonzero string $a$, define $f_a$ by $f_a(x)=g(x)+g(x+a)$ for all $x\in \{0,1\}^n$. Then, with probability $1-o(1)$ over the choice of $g$, for all nonzero $a$, we have that $\varepsilon(f_a,a)=O(\frac{n}{2^n\log n})$.
\end{lemma}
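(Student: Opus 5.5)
The plan is a first-moment bound for each pair $(a,t)$, followed by a union bound over all pairs. Fix nonzero $a$ and $t\notin\{0,a\}$, and write $H=\{0,a,t,a+t\}$. This is a subgroup of $\{0,1\}^n$ of order $4$, since $a,t$ are linearly independent. By definition,
\begin{equation*}
    \Pr_x[f_a(x)=f_a(x+t)] = \frac{N_{a,t}}{2^n},
\end{equation*}
where $N_{a,t}$ is the number of $x$ with $g(x)+g(x+a)+g(x+t)+g(x+t+a)=0$. The key structural observation is that this event depends only on the coset $x+H$. Moreover, distinct cosets involve disjoint sets of four inputs to $g$. Hence $N_{a,t}=4M_{a,t}$, where $M_{a,t}$ counts the cosets $C$ of $H$ with $\sum_{y\in C} g(y)=0$.

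For a fixed coset, the four values $g(y)$ are independent and uniform on $\{0,1\}^n$, so their sum is uniform and vanishes with probability exactly $2^{-n}$. Since the $2^{n-2}$ cosets use disjoint inputs, these events are mutually independent. Therefore
\begin{equation*}
    M_{a,t}\sim \Bin(2^{n-2},2^{-n}),
\end{equation*}
with mean $1/4$. For any integer $k\ge 1$, I would then use the elementary tail bound
\begin{equation*}
    \Pr[M_{a,t}\ge k]\le \binom{2^{n-2}}{k}2^{-nk}\le \frac{(1/4)^k}{k!}.
\end{equation*}
This comes from a union bound over the $k$-subsets of cosets.

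Next I would take a union bound over all at most $2^{2n}$ pairs $(a,t)$. This gives
\begin{equation*}
    \Pr\bigl[\exists\, a,t:\ M_{a,t}\ge k\bigr]\le \frac{2^{2n}}{4^k\,k!}.
\end{equation*}
Choose $k=\lceil 3n/\log_2 n\rceil$. Stirling gives $\log_2 k!\ge k\log_2 k - k\log_2 e$. Substituting, this is $(3n/\log_2 n)(\log_2 n-\log_2\log_2 n-O(1))$, which equals $3n-o(n)$ and so exceeds $2n+\omega(1)$ for large $n$. Hence the failure probability is $o(1)$; in fact it is $2^{-\Omega(n)}$.

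On the complementary event, every admissible pair satisfies $\Pr_x[f_a(x)=f_a(x+t)]\le 4k/2^n$. Taking the maximum over $t$ gives $\varepsilon(f_a,a)=O\!\left(\frac{n}{2^n\log n}\right)$ simultaneously for all nonzero $a$, as claimed.

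The only delicate step is justifying the exact independence and uniformity in the first paragraph. This relies on $t\notin\{0,a\}$, so that the four points of each coset are distinct. That is precisely why $0$ and $a$ are excluded in \cref{def:pseudoperiod_factor}. The rest is the routine choice of $k$ so that $k!$ beats the $2^{2n}$ union bound, which is where the $\log n$ saving in the denominator comes from.
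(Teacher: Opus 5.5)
Your proof is correct and follows essentially the same route as the paper. For each pair $(a,t)$, the number of vanishing faces is $\Bin(2^{n-2},2^{-n})$ because the quadrilaterals are disjoint; you then take a tail bound at threshold $\Theta(n/\log n)$, a union bound over the at most $2^{2n}$ pairs, and divide by $2^n$. The only difference is that you use the elementary subset bound $\binom{2^{n-2}}{k}2^{-nk}\le 4^{-k}/k!$ in place of the paper's multiplicative Chernoff bound (the two agree up to Stirling), and your handling of the factor $4$ between cosets and inputs, and of the union bound over both $a$ and $t$, is cleaner than the paper's.
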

We will prove this lemma at the end of the subsection; before this, we introduce the definitions and results necessary to the proof.
First, observe that if a given input $x$ admits a pseudoperiod $t$, we can rewrite this in terms of the original random function $g$: since $f(x)=f(x+t)$, we have that $g(x)+g(x+a)=g(x+t)+g(x+t+a)$, or equivalently, that $g(x)+g(x+a)+g(x+t)+g(x+a+t)=0$. Setting $\Delta_x\coloneqq g(x)+g(x+t)+g(x+a)+g(x+a+t)$, admitting a pseudoperiod is equivalent to having $\Delta_x=0$. Further, notice that this now defines a quadrilateral across the boolean cube with edges $a$ and $t$, illustrated in \cref{fig:period_quadrilateral}. Furthermore, any two quadrilaterals with differing $x$ values are either completely disjoint or coincide completely on the same face, meaning this partitions the boolean cube into $2^{n-2}$ distinct faces.
\begin{figure}[h]
    \centering
        \begin{tikzpicture}
            \coordinate (A) at (0,0);
            \coordinate (B) at (3,0);
            \coordinate (C) at (3,3);
            \coordinate (D) at (0,3);
            
            \draw (A) -- (B) -- (C) -- (D) -- cycle;
            
            \node[below left] at (A) {$x$};
            \node[below right] at (B) {$x+a$};
            \node[above right] at (C) {$x+a+t$};
            \node[above left] at (D) {$x+t$};
            
            \node[below] at (1.5,0) {$a$};
            \node[left] at (0,1.5) {$t$};
        \end{tikzpicture}
        \caption{The quadrilateral spanned by the pseudoperiod $t$ and the period $a$.}
    \label{fig:period_quadrilateral}
\end{figure}

We now follow the analysis of Daemen and Rijmen \cite{DBLP:journals/jmc/DaemenR07}, who derived concentration bounds on $\varepsilon$ for singular shifts (i.e., the probability that for a random function, $g(x)=g(x+t)$). We proceed to adapt this analysis for double shifts.
\begin{definition}[Cardinality of the differential]\label{def:cardinality_of_the_differential}
    For a fixed $g$, $a$ and $t$, the \emph{cardinality of the differential} $N(g,a,t)$ is the number of faces with edges $a$ and $t$ such that $g$ sums to zero across the vertices:
    \begin{equation*}
        N(g,a,t)=|\{\{x,x+a,x+t,x+a+t\}\mid g(x)+g(x+a)+g(x+t)+g(x+a+t)=0\}|.
    \end{equation*}
\end{definition}
Note that $N(g,a,t)=\frac{|\Delta_x|}{4}$, where $|\Delta_x|$ is the count of the number of inputs $x$ such that $\Delta_x=0$. This is because for each face that sums to zero, there are four inputs $x$ for which $\Delta_x=0$. We can show the following about the distribution of $N(g,a,t)$.
\begin{lemma}\label{lemma:cardinality_is_stochastic}
    For fixed $a,t$ and $g:\{0,1\}^n\rightarrow\{0,1\}^n$ chosen u.a.r., $N(g,a,t)$ is a random variable with binomial distribution:
    \begin{equation*}
        \Pr[N(g,a,t)=i]=(2^{-n})^i(1-2^{-n})^{2^{n-2}-i}{2^{n-2}\choose i}, \qquad 0\leq i \leq 2^{n-2}\,.
    \end{equation*}
    That is, $N(g,a,t)\sim\Bin(2^{n-2},2^{-n})$.
\end{lemma}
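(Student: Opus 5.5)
The plan is to show that $N(g,a,t)$ is a sum of $2^{n-2}$ independent Bernoulli indicators, each with success probability exactly $2^{-n}$; the binomial law then follows immediately. Throughout I take $a\neq 0$ and $t\notin\{0,a\}$, which is the only range in which $t$ enters \cref{def:pseudoperiod_factor}. This is also exactly the condition under which $a$ and $t$ are linearly independent over $\mathbb{F}_2$.

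\emph{Step 1 (partition into faces).} Let $V=\mathrm{span}\{a,t\}=\{0,a,t,a+t\}$. Since $a$ and $t$ are linearly independent, $V$ is a $2$-dimensional subspace. Hence $|V|=4$, so the four vertices $x,x+a,x+t,x+a+t$ are pairwise distinct.

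The faces $\{x,x+a,x+t,x+a+t\}$ are exactly the cosets $x+V$. Cosets of a subspace either coincide or are disjoint, so the faces partition $\{0,1\}^n$ into $2^{n}/4=2^{n-2}$ classes. Fix one representative $x_1,\dots,x_{2^{n-2}}$ per coset. Then
\begin{equation*}
N(g,a,t)=\sum_{j=1}^{2^{n-2}} \mathbbm{1}\left[\Delta_{x_j}=0\right],
\end{equation*}
and $\Delta_{x}$ depends only on the coset of $x$, since it is the XOR of $g$ over the whole coset.

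\emph{Step 2 (each indicator has probability $2^{-n}$).} A uniformly random $g$ means the values $\{g(y)\}_{y\in\{0,1\}^n}$ are i.i.d.\ uniform on $\{0,1\}^n$. For a fixed face, $\Delta_{x_j}=g(x_j)+R$, where $R=g(x_j+a)+g(x_j+t)+g(x_j+a+t)$. The points $x_j+a$, $x_j+t$, $x_j+a+t$ are distinct from $x_j$ (Step 1), so $R$ is independent of $g(x_j)$. Conditioned on $R$, the sum $g(x_j)+R$ is therefore a uniformly random element of $\{0,1\}^n$. Hence $\Pr[\Delta_{x_j}=0]=2^{-n}$.

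\emph{Step 3 (independence across faces).} Distinct faces are disjoint sets of inputs. Each indicator $\mathbbm{1}[\Delta_{x_j}=0]$ is a function only of the values of $g$ on its own face. Since the values of $g$ at distinct points are independent, these $2^{n-2}$ indicators are mutually independent. A sum of $2^{n-2}$ independent Bernoulli$(2^{-n})$ variables is $\Bin(2^{n-2},2^{-n})$, which gives exactly the stated probability mass function.

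There is no real obstacle here; the only point requiring care is Step 1. One must check that the four vertices are distinct, which is exactly where $t\notin\{0,a\}$ is used. If $t\in\{0,a\}$, the "face" degenerates and $\Delta_x\equiv 0$ identically. I would state this hypothesis explicitly, since the lemma as written says only "fixed $a,t$".
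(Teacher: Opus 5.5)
Your proof is correct and takes the same route as the paper: you partition the cube into $2^{n-2}$ disjoint faces, note that each face's XOR-sum is uniform and so equals zero with probability $2^{-n}$, and get independence from disjointness. Your version is more careful than the paper's brief argument, in particular in stating the needed hypothesis $a\neq 0$, $t\notin\{0,a\}$ (linear independence of $a$ and $t$), without which the faces degenerate and the binomial law fails.
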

\begin{proof}
    A random function maps the $2^n$ different input values $x$ to independent output values $g(x)$, and hence for each quadrilateral $\{x,x+a,x+a+t,x+t\}$, it maps the output over the vertices of the quadrilateral to independent values too, and thus the sum of outputs to independent values. Each of these mappings is successful if it maps to exactly $0$, which happens with probability $\frac{1}{2^n}$, and we have $2^{n-2}$ mappings, corresponding to the number of distinct quadrilaterals.
\end{proof}
We can now complete the proof of \cref{lemma:pseudoperiod_bound}.
\begin{proof}[Proof of~\cref{lemma:pseudoperiod_bound}]
    For a fixed $t$, by \cref{lemma:cardinality_is_stochastic}, we know that $N(g,a,t)$ has distribution $Y\sim\Bin(2^{n-2},2^{-n})$, and thus has expectation $\mu=1/4$. By standard Chernoff bounds and setting $(1+\delta)\mu=4m$
    \begin{align*}
        \Pr[Y\geq (1+\delta)\mu]&\leq \left(\frac{e^\delta}{(1+\delta)^{1+\delta)}}\right)^\mu\\
        \Pr[Y\geq 4m]&\leq \left(\frac{e^{4m-1}}{(4m)^{4m}}\right)^{1/4}\\
        &= e^{m-1/4}(4m)^{-4m}\\
        &\leq \left(\frac{e}{4m}\right)^m.
    \end{align*}
    Let $T=2^n$. We note there are at most $T$ possible periods $a$, and for each of these periods $a$, at most $T$ possible pseudoperiods. This means there are at most $T^2$ pairs of period and possible pseudoperiod. Let $m$ be the number of ``bad'' trials, in the sense that $a$ is a period and $t$ is a pseudoperiod (as opposed to only a possible one). By standard Chernoff bounds, we have that
    \begin{equation*}
        \Pr[\max_t Y_t\geq m ]\leq \Pr\left[\bigcup_{t=1}^T\{Y_t\geq m\}\right]\leq T\Pr[Y\geq m]\leq T\left(\frac{e}{4m}\right)^m.
    \end{equation*}
    We can now choose the rightmost value to be small. Choose $m=\frac{8\log T}{\log \log T}$. Then, we have
    \begin{align*}
        \log[T(e/4m)^m]&=2\log T-m\log(4m/e)\\
        &=\log T-m(\log\log T-\log\log\log T+\log(32/e))\\
        &\leq2\log T-m(0.5\log\log T)\\
        &\leq -2\log T.
    \end{align*}
    This implies that $\Pr[\max_tY_t\geq m]\leq T^{-2}$, and hence, we have that $\max_t N(g,a,t)\leq O(n/\log n)$ with probability $1-T^{-2}$. Then, for each $a$, dividing by the number of possible $x$ values, we get that $\varepsilon(f,a)=O\left(\frac{n}{2^n\log n}\right)$, with probability ${1-2^{-2n}}=1-o(1)$.
\end{proof}
\subsection{Hardness Construction}
We will make use of the analysis of the Approximate Simon's Promise problem by Kaplan \etal~ to show that Simon's algorithm for sufficiently small pseudoperiods will still return a correct period with high probability.
\begin{theorem}[Theorem 1, \cite{DBLP:conf/crypto/KaplanLLN16}]\label{thm:approximate_promise_simons}
    If $\varepsilon(f,a)\leq p_0<1$, then Simon's algorithm returns $a$ with $cn$ queries with probability at least $1-(2(\frac{1+p_0}{2})^c)^n$, where $c>0$ is a parameter capturing how many times the algorithm is run to boost probability.
\end{theorem}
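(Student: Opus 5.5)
We need to prove Kaplan et al.'s Theorem 1: if $\varepsilon(f,a)\le p_0<1$, then Simon's algorithm run with $cn$ queries outputs $a$ with probability at least $1-\bigl(2(\tfrac{1+p_0}{2})^c\bigr)^n$.

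The plan has two parts. First, bound the probability that one run of Simon's subroutine yields a vector $y$ orthogonal to a fixed $t\notin\{0,a\}$. Second, union bound over all such $t$ to show the $cn$ samples pin down $a^\perp$ exactly, so that solving the linear system returns $a$.

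\textbf{Step 1: single-run bound.} One run prepares $2^{-n/2}\sum_x |x\rangle|f(x)\rangle$, applies $H^{\otimes n}$ to the first register, and measures $y$. Then
\begin{equation*}
\Pr[y]=2^{-2n}\sum_{z}\Bigl|\sum_{x:f(x)=z}(-1)^{x\cdot y}\Bigr|^2 .
\end{equation*}
Since $f(x)=f(x+a)$ for all $x$, any $y$ with $a\cdot y=1$ gets amplitude zero, so every outcome lies in $a^\perp$. For a fixed $t\notin\{0,a\}$, consider the indicator $\tfrac{1+(-1)^{t\cdot y}}{2}$ of $t\cdot y=0$. Expanding the square gives
\begin{equation*}
\sum_y \Pr[y](-1)^{t\cdot y}=\Pr_x[f(x)=f(x+t)],
\end{equation*}
using $\sum_y(-1)^{y\cdot(x+x'+t)}=2^n\,\mathbbm{1}[x'=x+t]$. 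Hence
\begin{equation*}
\Pr_y[t\cdot y=0]=\frac{1+\Pr_x[f(x)=f(x+t)]}{2}\le\frac{1+p_0}{2}.
\end{equation*}
This identity is the key computation, and I expect it to be the main place where care is needed. The signs and normalisations must be tracked so that the period $a$ drops out cleanly while the pseudoperiod collision probability appears exactly.

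\textbf{Step 2: many runs and the union bound.} Take $cn$ independent runs producing $y_1,\dots,y_{cn}$. The algorithm outputs $a$ precisely when the $y_i$ span $a^\perp$, i.e.\ when their common orthogonal complement is $\{0,a\}$. This fails only if some $t\notin\{0,a\}$ satisfies $t\cdot y_i=0$ for all $i$. For a fixed such $t$, independence and Step 1 give probability at most $\bigl(\tfrac{1+p_0}{2}\bigr)^{cn}$. A union bound over at most $2^n$ values of $t$ gives failure probability at most
\begin{equation*}
2^n\Bigl(\frac{1+p_0}{2}\Bigr)^{cn}=\Bigl(2\Bigl(\frac{1+p_0}{2}\Bigr)^c\Bigr)^n .
\end{equation*}

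\textbf{Step 3: output.} On success, Gaussian elimination on the $y_i$ yields a one-dimensional solution space whose unique nonzero element is $a$. The query count is $cn$, one query per run, which gives the stated bound.
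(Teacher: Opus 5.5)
Your argument is correct and is essentially the proof in the cited source \cite{DBLP:conf/crypto/KaplanLLN16}; the paper itself gives no proof and only cites the result. You follow the same steps: the single-run identity $\Pr_y[t\cdot y=0]=\tfrac12\bigl(1+\Pr_x[f(x)=f(x+t)]\bigr)$ from the orthogonality computation, then independence across the $cn$ runs and a union bound over the at most $2^n$ candidates $t\notin\{0,a\}$. Since each run uses only the state $2^{-n/2}\sum_x|x\rangle|f(x)\rangle$, your proof also applies verbatim to $\qsample$ examples, which is how the paper uses the theorem.
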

 Let $g\colon\{0,1\}^n\to\{0,1\}^n$ be a function encoded via an oracle. This function serves as the base of our concept class $\mathcal{C}$, which we now describe how to construct.
\begin{enumerate}
     \item For each nonzero string $a$, define the corresponding PAS function constructed from $g$. There will be $2^n-1$ of these.
     \item Each of these PAS functions $f_a$ induces a distribution over strings $x||f_a(x)$, with $x$ uniform over $n$-bit strings.
     \item These $2^n-1$ distributions form the concept class $\mathcal{C}$.
\end{enumerate}
One of these distributions will be picked as the one that needs to be PAC-generated, with one learner given classical $\sample$ access, and one given quantum $\qsample$ access. We then show the following.
\begin{theorem}\label{thm:qsample_gennable_sample_not_learnable}
    For the concept class $\mathcal{C}$ described above, with additional access to the function $g$ encoded via oracle $\mathcal{W}$, there is a quantum algorithm with $\qsample$ access to $f$ that can efficiently PAC-generate this concept class. However, no quantum algorithm with only $\sample$ access to $f$ can efficiently PAC-generate this concept class.
\end{theorem}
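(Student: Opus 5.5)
We split the proof into an upper bound (the $\qsample$ learner) and a lower bound (every $\sample$ learner fails). Throughout, we fix $g$ via \cref{lemma:pseudoperiod_bound}. With probability $1-o(1)$, a uniformly random $g$ satisfies $\varepsilon(f_a,a)=O(n/(2^n\log n))$ simultaneously for all nonzero $a$. Both halves of the argument hold for most $g$ (the lower bound is an average-case statement over $g$ and $a$), so by a union bound and the probabilistic method there is a single oracle $g$ for which both hold. For the lower bound we may alternatively argue directly over random $g$, which is the cleaner route.

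\textbf{Upper bound.} The $\qsample$ oracle for $D_{f_a}$ returns $2^{-n/2}\sum_x|x\rangle|f_a(x)\rangle$. This is exactly the state Simon's algorithm prepares after its single query, so no query to $f_a$ itself is needed. Apply $H^{\otimes n}$ to the first register and measure to obtain a $y$ with $\langle y,a\rangle=0$. Repeating this $cn$ times and solving the resulting linear system recovers $a$, with failure probability at most $(2(\frac{1+p_0}{2})^c)^n$ by \cref{thm:approximate_promise_simons}. Since $p_0=o(1)$, taking $c=2$ already makes this exponentially small, so $\delta$ is handled. The output generator samples $x$ u.a.r., queries $\mathcal{W}$ at $x$ and $x+a$, and outputs $x\|g(x)+g(x+a)$. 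This reproduces $D_{f_a}$ exactly, using two oracle calls and polynomial time.

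\textbf{Lower bound.} Let $g$ be uniformly random and $a$ uniformly random nonzero. Consider a learner with $q=\poly(n)$ classical samples $(x_i,f_a(x_i))$ and $Q=\poly(n)$ (possibly quantum) queries to $g$, followed by a generator that makes $\poly(n)$ further queries. We want to show the output distribution has TV distance $1-o(1)$ from $D_{f_a}$. The plan has three steps.

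First, we argue that the learner cannot find $a$ except with negligible probability. Each sample reveals $g(x_i)+g(x_i+a)$ at a random $x_i$. Since $g(x_i+a)$ is a fresh uniform value unless $x_i+a$ is queried, each sample is distributed as $x_i$ together with an essentially uniform string, and is independent of $a$. To make this rigorous, reprogram $g$ at the points $x_i+a$. By a hybrid or one-way-to-hiding style bound, the total query magnitude on the set $\{x_i+a\}$ is $O(Qq/2^{n/2})$ amplitude, i.e.\ the probability of distinguishing is $O(Q^2q/2^n)$. In the idealized world the learner's view is independent of $a$. The same holds for the generator, whose internal state carries no information about $a$.

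Second, we bound how much probability any $a$-independent distribution $\mathcal{D}'$ on $(n+1)$-bit strings, more precisely $2n$-bit strings $x\|y$, can place on the support of $D_{f_a}$ on average over $a$. Here $\sum_{x}\min(\mathcal{D}'(x,y),2^{-n}\mathbb{1}[y=f_a(x)])\le\Pr_{\mathcal{D}'}[y=f_a(x)]$. The generator's output comes from a query algorithm with $\poly(n)$ queries, so outputting the correct $g(x)+g(x+a)$ requires either querying both points or guessing. Averaged over $a$, the probability that the generator queried $x+a$ for its output $x$ is at most $\poly(n)/2^n$. Otherwise the correct value is uniform and is hit with probability $2^{-n}$. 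Hence $\mathbb{E}_a[1-\TV]\le\poly(n)/2^{n/2}$. It follows that for most $a$ the TV distance is $1-o(1)$, so for $\epsilon<1-o(1)$ efficient PAC generation fails.

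\textbf{Main obstacle.} The hardest step is the joint argument over quantum queries to $g$: the learner and the generator both query $g$ in superposition, so they might correlate $x$ with $x+a$ without learning $a$ explicitly. I would handle this by a compressed-oracle or reprogramming argument on the pair of unknowns $(a,g)$. The goal is to show that the probability of producing any pair $(x,y)$ with $y=g(x)+g(x+a)$, given the sample data, is $\poly(n)\,2^{-n/2}$. This is a sum-type collision bound akin to Simon's classical-sample hardness.
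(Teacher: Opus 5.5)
Your upper bound is the same as the paper's. For the lower bound you take a different route. The paper invokes Cleve's classical query bound for Simon's problem via a classical simulation, then counts agreeing points. You instead average over $g$ and $a$ and reprogram $g$ on $X_q+a$. Your route fits this model better, because the learner also queries $g$, which the reduction to Cleve's bound does not account for: from one example $(x,y)$, inverting $g$ at $y+g(x)$ already yields $a$.

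However, your Step 2 fails as written. You replace $\sum_{(x,y)}\min(\mathcal{D}'(x,y),2^{-n}\mathbbm{1}[y=f_a(x)])$ by $\Pr_{\mathcal{D}'}[y=f_a(x)]$, which discards the $2^{-n}$ cap. You then assert that the correct value at $x$ can only be produced by querying $x+a$ or guessing. This is false at the sample points. A learner that hardcodes one example $(x_1,y_1)$ into a generator that always outputs it has $\Pr_{\mathcal{D}'}[y=f_a(x)]=1$, and it never touches $x_1+a$.

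The idealized world does not rescue this. Success is judged against the true $f_a$, which agrees with the samples. In your coupling, checking $y=f_a(x)$ at $x=x_i$ reads $g$ at $x_i+a$, exactly where the two worlds differ. So the success predicate is not a function of the learner's view, and the Step 1 hybrid says nothing about it. For the same reason, the goal stated in your last paragraph is false as stated: a pair with $y=g(x)+g(x+a)$ can be produced with probability $1$ given the samples, by outputting one of them.

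The repair is to keep the cap. Let $E=X_q\cup(X_q+a)$, so $|E|\le 2q$. The overlap is then at most $2q\cdot 2^{-n}+\Pr_{\mathcal{D}'}[y=f_a(x),\,x\notin E]$. For $x\notin E$, neither $x$ nor $x+a$ is reprogrammed, so $f_a(x)=g(x)+g(x+a)$ for the original $g$.

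Now run the hybrid over the learner's and the generator's queries together. In the ideal world the output $(x,y)$ is independent of $a$ given $(g,X_q,Y_q)$. Averaging over $a$ gives $\Pr_a[g(x+a)=y+g(x)]\le\max_v|g^{-1}(v)|/(2^n-1)=O(n/2^n)$ for all but a negligible fraction of $g$. This also removes your ``main obstacle'': no compressed-oracle argument is needed, since an $a$-independent output cannot correlate $x$ with $x+a$ beyond preimage counting. The per-point $2^{-n}$ cap is exactly what the paper's count of agreeing points relies on.

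Two smaller points. First, the hybrid bounds the distinguishing advantage by $O(Q\sqrt{q/2^n})$, not $O(Q^2q/2^n)$ (you squared it); it is still negligible. Second, going from ``each fixed learner fails for most $(g,a)$'' to a single $g$ that defeats every efficient learner needs a Borel--Cantelli argument over countably many machines and all $n$, not a plain union bound.
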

\begin{proof}
    We first outline how this can efficiently be done with quantum examples. First, we choose a u.a.r. function $g$ such that 
    \[
        \varepsilon(f_a,a) = O\!\left(\frac{n}{2^n\log n}\right)
    \]
    holds simultaneously for all non-zero $a$, which is guaranteed to exist by~\cref{lemma:pseudoperiod_bound} (and the probabilistic method). We use this function $g$ to define our concept class $\mathcal{C}$ as described above. Then, choosing one such $f$ in this concept class, given uniform superposition examples from $\qsample$, and for some appropriately large constant $c$, one can run Simon's algorithm via \cref{thm:approximate_promise_simons} to obtain the period $a$. This succeeds with probability $(1-(2(\frac{1+p_0}{2})^c)^n)$ and requires $cn$ samples. Then, conditional on obtaining the period $a$, the distribution can now be generated from exactly: this is done by uniformly sampling $x$, then querying $g$ on inputs $x$ and $x+a$, to calculate $f(x)=g(x)+g(x+a)$ before outputting $x||f(x)$. (Observe that although the generating algorithm still requires oracle access to the function $g$, crucially it no longer needs access to the example oracle for $f$, meaning it can successfully PAC generate this distribution class.) This algorithm only errs if Simon's algorithm fails to find the period $a$: the probability of failure can be made arbitrarily small by increasing the constant $c$.
 
    We now argue that with only \emph{classical} $\sample$ examples, even a quantum algorithms cannot generate this distribution efficiently. First, we note that obtaining the period $a$ with classical samples requires exponential sample complexity: this follows from the classical \emph{query} bound of $\Omega(\sqrt{2^n})$ due to Cleve~\cite{cleve2011classical}, meaning any classical algorithm requires at least that many queries to uncover the period $a$.\footnote{We note that there is a slight technicality, with this being a query bound, whereas in this setting we are using classical examples. However, the former is strictly stronger than the latter, since in the query setting, an algorithm can choose which $x$ to query, whereas in the generating setting the $\sample$ oracle randomly gives examples, taking choice away from the algorithm.} Then, any quantum learner, if it only has classical examples, can be simulated (possibly very inefficiently) by a classical computer with same sample complexity: this possibly computationally inefficient simulation yields a classical algorithm to which the query lower bound applies. 
    
    Having shown that finding $a$ cannot be done efficiently, it is sufficient to argue that, without learning $a$, no generator can sample from a distribution close to the required one. To do so, the first thing we argue is that a classical learner does not gain too much information from the guarantee that the random function $g$ will create a concept class where every $f$ has a small pseudoperiod factor. This is done by observing that via \cref{lemma:pseudoperiod_bound}, any random function $g$ satisfies this with probability $1-o(1)$, so even if the classical algorithm were to condition on such random functions, they would not meaningfully reduce the search space.
    
     Assume the generator produces samples from a distribution $\mathcal{D}$, and the true distribution is $\mathcal{D}_f$. We want to bound the TV distance between the two distributions. Assume that the learning algorithm sees $q=\poly(n)$ examples, and (as not charging duplicate samples to the algorithm only strengthens the lower bound) assume they are distinct. We can assume without loss of generality that the distribution that the algorithm generates from $\mathcal{D}$ is uniform over inputs $x$. In other words, the algorithm uniformly at random samples $x$, and tries to guess $f(x)$. (This is without loss of generality because the real distribution $\mathcal{D}_f$ is uniform over $x$, so having $\mathcal{D}$ not uniform only increases TV distance.) For the sake of argument, we will assume that if $x+a$ is in the queries, the algorithm will always succeed, by being able to query $g$ on both points, and thus reconstruct $f$ (this is only beneficial to the algorithm). Else, with no knowledge of $x+a$, $f(x)$ behaves like a uniformly random $n$-bit string, and the learner can only guess it with probability $\frac{1}{2^n}$. Hence, the number of points on which $\mathcal{D}_f$ and $\mathcal{D}$ agree is $\poly(n)$ with high probability, and hence the probability mass on which they agree is with high probability $\frac{\poly(n)}{2^n}$. Then, the TV distance is bounded by:
    \begin{align*}
        \TV(\mathcal{D},\mathcal{D}_f)&=1-\sum_x(\mathcal{D}(x)\wedge\mathcal{D}_f(x))\\
        &\geq 1-\frac{\poly(n)}{2^n}\\
        &=1-2^{-\Omega(n)}.
    \end{align*}
    Overall, with high probability, if the learner does not know $a$, it will generate samples from a distribution that has TV distance close to 1, showing that no quantum algorithm with only $\sample$ access can efficiently PAC-generate this concept class.
\end{proof}
\begin{remark}
    We note that embedding the period in such a way that can be recovered via Simon's algorithm is crucial. This is because in order to require the period, Simon's algorithm only requires uniform superpositions of $f$. A stronger result might be to try using Boneh and Lipton's period embedding, as described by Zhandry in \cite{10.1145/3450745}. Indeed, this would create such a separation using all quantum secure functions. However, crucially Boneh and Lipton's period embedding algorithm requires the ability to do more than just uniform superpositions. Among other things, the learner must be able to evaluate $f$ at specific points via classical queries (see, Section 6 in \cite{DBLP:conf/crypto/BonehL95}), which cannot be done in this model, since $\qsample$ does not actually give query access, and only gives the uniform superposition example. An attempt to generalize this separation must only rely on querying $f$, or the induced distribution in uniform superposition.
\end{remark}
\section{Conclusion}\label{sec:conclusion}
We gave an explicit distribution class that, relative to an oracle, can be efficiently generated by a quantum algorithm with quantum examples, but not by a quantum algorithm with only classical examples. To the best of our knowledge, this is the first separation between quantum and classical examples shown in the learning setting. We conclude with a few avenues for future research.
\paragraph{Oracles.} In both our constructions for \cref{thm:function_hardness_does_not_imply_distribution_hardness} and \cref{thm:qsample_gennable_sample_not_learnable}, we define the PAC learning instance with regards to an auxiliary function, which induces the concept and distribution classes. We then allow both the learner and the outputted generator to query this auxiliary function by providing oracle access to it. It would be interesting to see if a hard instance could be constructed in either of these instances without having the concept or distribution classes being induced by this auxiliary function. Or as an intermediate, it would be interesting to see if hard instances could be constructed where though the concept and distribution classes are induced by this auxiliary function, the outputted generators do not require oracle access to this function. On the other hand, we think that studying what oracle separations can be proven for learning and testing is also an interesting and underexplored area that may be worth pursuing.
\paragraph{Relaxed Assumptions for \cref{thm:function_hardness_does_not_imply_distribution_hardness}.} In our proof of \cref{thm:function_hardness_does_not_imply_distribution_hardness}, we assume the existence of one-way permutations. Though this is a standard assumption in cryptography, we wonder if such a strong assumptions is necessary. In other words, it would be interesting to see if a hard instance that relies on weaker assumptions could be constructed. For example, is there a hard instance that only requires assuming the existence of one-way functions, instead of one-way permutations? Or is there a hard instance that does not rely on one-way functions, but possibly another cryptographic primitive, such as trapdoor functions.
\paragraph{Maximal Quantum and Classical Example Separations.} Implicitly, in the proof of \cref{thm:qsample_gennable_sample_not_learnable}, we show that there exists a distribution class that can be generated in $O(n)$ quantum examples, but cannot be generated with less than $\Omega(2^{n/2})$ classical examples. This result just follows directly from the sample bounds in Simon's problem. It would be interesting to see what the maximal separation between quantum examples and classical examples in this setting could be. For example, Aaronson and Ambainis \cite{DBLP:journals/siamcomp/AaronsonA18} show that in regards to the sample complexity for boolean functions, the forrelation problem can be solved in $1$ quantum query, but requires $\widetilde\Omega(2^{n/2})$ classical queries. It would be interesting to see if such a separation between quantum and classical examples in the learning setting could be shown as well.

\paragraph{Acknowledgements.} We would like to thank Cl\'ement Canonne for comments, feedback and proofreading for this writeup and for discussions on the work. We would like to thank Ryan Sweke for discussions on their previous work, comments on this work, and for pointing to \cite{DBLP:conf/innovations/GilyenL20}. We would like to thank Matthias Caro for comments on this work, and for pointing us to \cite{ICLR2026_f0075fe4} and \cite{MolteniRiccardo2026Eqai}. AI assistance was used in generating \cref{fig:pac_setting}.
\printbibliography
\end{document}